\def\al{\alpha}
\def\id{{\rm{id}}}
\def\sk{\vskip .4cm}
\def\noi{\noindent}
\def\al{\alpha}
\def\sk{\vskip .4cm}
\def\noi{\noindent}
\def\al{\alpha}
\let \part\partial
\def\epsi{\varepsilon}
\def\de{\delta}
\def\part{\partial}
\def\sk{\vskip .4cm}
\def\noi{\noindent}
\def\X0{X^0}
\def\al{\alpha}
\def\epsi{\varepsilon}
\def\de{\delta}
\def\square{{\,\lower0.9pt\vbox{\hrule \hbox{\vrule height 0.2 cm
\hskip 0.2 cm \vrule height 0.2 cm}\hrule}\,}}
\def\Ahat{{\hat A}}
\def\epsihat{{\hat \epsi}}
\def\dehat{{\hat \de}}
\def\Fhat{{\hat F}}
\def\fhat{{\hat f}}
\def\ghat{{\hat g}}
\def\phihat{{\hat \phi}}
\def\Psihat{{\hat \Psi}}
\def\*{\star}
\def\m*{\star^{op}}
\newcommand{\eqa}{\begin{eqnarray}}
\newcommand{\en}{\end{equation}}
\newcommand{\ena}{\end{eqnarray}}
\newcommand{\enn}{\nonumber \end{equation}}
\newcommand{\eq}[1]{\begin{equation}
                     \begin{split} #1 \end{split}
                     \end{equation}}
\newcommand{\CCi}{{C}^\infty}
\newcommand{\CE}{{\cal E}}
\newcommand{\CL}{{\cal L}}
\newcommand{\SWA}{\hat A}
\newcommand{\SWE}{\hat \varepsilon}
\newcommand{\SWP}{\hat \phi}
\newcommand{\SWPA}{\hat \Psi}
\newcommand{\SWF}{\hat F}
\newcommand{\SWV}{\hat \delta}
\newcommand{\TSWp}{\phi^\theta}
\newcommand{\Hop}{\phi^\theta}
\newcommand{\Tphi}{\Hop}
\newcommand{\SWAp}{{\hat A}^{{\!\!\:\!\!\!\phantom{I}}^{{'}}}}
\newcommand{\SWEp}{{\hat \varepsilon}^{\!\!\;'}}
\newcommand{\SWPp}{{\hat \phi}^{{\!\!\:\!\!\!\phantom{I}}^{{'}}}}
\newcommand{\SWVp}{{\hat \delta}^{{\!\;\!\!\!\!\phantom{I}}^{{'}}}}
\newcommand{\AAA}{{A}}
\newcommand{\BBB}{{B}}
\numberwithin{equation}{section}
\def\TT2{T}
\newtheorem{definition}{Definition}
\newtheorem{theorem}{Theorem}
\newtheorem{rem}[theorem]{Remark}
\newtheorem{lemma}[theorem]{Lemma}
\newtheorem{prop}[theorem]{Proposition}
\newcommand{\longhookrightarrow}{\lhook\joinrel\xrightarrow{\phantom{----}}}
\newcommand{\longhookuparrow}{\mathrel{\rotatebox[origin=c]{90}{$\:\:\longhookrightarrow\:\;$}}}
\newcommand{\M}{\mathcal{M}}
\begin{document}
\thispagestyle{empty}
\vspace*{-1.5cm}

\begin{flushright}
  {}
\end{flushright}

\vspace{1.5cm}


\begin{center}
{\huge Global Seiberg--Witten maps for \\[.2em] $U(n)$-bundles on tori and T-duality}

\end{center}


\vspace{0.4cm}

\begin{center}
{\bf   Paolo Aschieri,$\!^{1,2,3}$ Andreas Deser$\:\!^2$  }
 
\end{center}


\vspace{0.4cm}

\begin{center}
$^1$\small{Dipartimento di Scienze  e Innovazione Tecnologica \\
Universit\`a of Piemonte Orientale \\
Viale T. Michel 11, 15121, Alessandria }

  \vspace{0.2cm}

  $^2$\small{Istituto Nazionale di Fisica Nucleare, Sezione di Torino} \\ 
  { Via Pietro Giuria 1, 10125 Torino }\\

  \vspace{0.2cm}

$^3$\small{Arnold--Regge centre, Torino, via P. Giuria 1, 10125, Torino, Italy}
  \vspace{0.4cm}

\small{{\tt paolo.aschieri@uniupo.it}, {\tt deser@to.infn.it} }

\vspace{0.25cm}

\end{center} 

\vspace{0.4cm}

\begin{center}
\today
\end{center}

\begin{abstract}
Seiberg--Witten maps are a well-established method to locally
construct noncommutative gauge theories starting from commutative
gauge theories. We revisit and classify the ambiguities and the freedom in the
definition. 

 Geometrically, Seiberg--Witten maps provide
a quantization of bundles with connections. We study the case of
$U(n)$-vector bundles on two-dimensional tori, prove the
existence of globally defined Seiberg--Witten maps (induced from the
plane to the torus)
and show their compatibility with Morita equivalence. 

\end{abstract}
\sk\sk\sk $~~~~~~~~$\small{Keywords: {Deformation quantization,
    Ambiguities in Seiberg--Witten map,
      noncommutative\\$~~~~~~~~~~~~~~~~~~~~~~$ torus, noncommutative
    vector bundles, projective modules,  Morita equivalence}}

\clearpage

\tableofcontents


\section{Introduction}

The geometry of noncommutative tori is one of the most studied and
inspiring examples in the mathematics and physics literature.
Yang-Mills theory on 
noncommutative tori was defined in \cite{Connes:1987ue} and in \cite{Connes:1997cr} it was shown that noncommutative
tori emerge as backgrounds for compactifications of M-theory.
The study of Yang-Mills and Born-Infeld theories on noncommutative
tori has proven very fruitful:  On the one hand it allows to realize
M-theory and string theory
duality transformations within the low energy physics of Noncommutative
(Super) Yang-Mills theories \cite{Connes:1997cr}. 
On the other hand it provides exact low energy D-branes effective actions (in a given
$\alpha' \to 0$ sector of string theory where closed strings
decouple). This is a general feature of low energy effective actions
of open strings in the presence of a constant background flux ($B$-field) and has led to
the Seiberg--Witten map between commutative and noncommutative gauge
fields \cite{Seiberg:1999vs}.  This is a field transformation that
allows to rewrite a gauge theory on commutative space as a gauge
theory on noncommutative space. Particularly relevant are then action
functionals, like the Yang-Mills one, that are invariant in form under
this map (background independent). 

Most of the literature on Seiberg--Witten map does not consider global
geometric aspects and focusses on the local properties. Furthermore
this map has mainly been studied in the context of formal deformation
quantization. 
An interesting result there concerns abelian gauge theories,  in that case the Seiberg--Witten map has been generalized to nonconstant background
$B$-fields using Kontsevich formality theorem \cite{Jurco:2001my},
and shown to quantize line bundles on a Poisson manifold to quantum
line bundles  \cite{Jurco:2001kp}.

\sk
In this paper we study global aspects of the Seiberg--Witten map for
nonabelian $U(n)$-gauge fields on noncommutative tori and show that it
quantizes vector bundles on tori with connections to vector
bundles on noncommutative tori with noncommutative connections, these are the nonformal (Hilbert)
modules over noncommutative tori studied in the mathematics
and physics literature \cite{rieffel1981}, \cite{Connes:1997cr},
\cite{Ho:1998hqa}, \cite{Morariu:1998qm}, \cite{Schwarz:1998qj}. 
In this global nonformal context we also revisit the relation between
Seiberg--Witten maps and  Morita equivalence (T-duality)
transformations showing their compatibility.

Since the torus is a quotient of the plane, gauge and matter fields
associated with a bundle on the torus are just gauge and matter fields
(of trivial bundles) on the plane that satisfy twisted periodicity
conditions, determined by $U(n)$-valued functions $\Omega_\al$ on the plane.
While these are thought as ``transition functions'' of the torus, from
the plane perspective they are endomorphisms of the trivial rank $n$ bundle on
the plane, or equivalently, sections of the trivial rank $n\times n$
bundle, transforming in the adjoint representation.
The basic idea is then that the Seiberg--Witten map quantization of
these sections defines the quantum bundle on the
noncommutative torus via noncommutative twisted periodicity condition.
In this way the Seiberg--Witten map on the plane induces a
quantization of bundles on tori.  Consistently we show that the
Seiberg--Witten map for matter fields on the plane induces a
Seiberg-Witten map for matter fields on the torus.

A key point in order to obtain as explicit solutions the quantum
sections $\phi^\theta$ studied via different methods in \cite{Ho:1998hqa} is to exploit the freedom
in the Seiberg--Witten differential equation defining the
Seiberg--Witten map. This has led us to an exaustive study of the
ambiguities in the definition of the Seiberg--Witten map extending
previous results in \cite{Asakawa:1999cu} and \cite{Suo:2001ih}.

\sk

The paper is organized as follows. We begin by reviewing
the Seiberg--Witten map on $\mathbb{R}^d$, including the original Seiberg--Witten
differential equations and their recursive solutions as
formal power series in the noncommutativity parameter $\theta$.  
We present a simple way to classify the ambiguities of the Seiberg--Witten
map by classifiyng all terms allowed in the  Seiberg--Witten
equations, including those breaking covariance under constant $GL(d,\mathbb{R})$
rotations on $\mathbb{R}^d$, that is anyhow broken on the
$d$-dimentional torus. 

In chapter three we present
quantized $U(n)$-bundles on tori following
\cite{Ho:1998hqa} and establish the 
relation to the projective modules description of the more mathematical
literature, where $\theta$ is a real number.
We also describe endomorphisms of these  
bundles, and connections. All these fields are seen as functions on the
plane satisfying twisted periodicity conditions.

In the
following chapter four we prove the main result, the
compatibility of the Seiberg--Witten map with the twisted periodicity
conditions defining bundles with connections on the torus, and show
that the quantized sections are obtained via the
Seiberg--Witten map. This establishes a global and converging
Seiberg--Witten map for $U(n)$-bundles on the torus. In particular
different ordering prescriptions for the quantization of the algebra of
sections in the adjoint correspond to different choices of
Seiberg--Witten map.
 In chapter five we review Morita equivalence and its
T-duality transformations and show its compatibility with the Seiberg--Witten map.

\section{Seiberg--Witten map on $\mathbb{R}^d$}\label{SWl}

In this section we recall the Seiberg--Witten map between  the
noncommutative gauge fields $\SWA$ and gauge parameters $\epsihat$  on the noncommutative
$n$-dimensional plane 
$\mathbb{R}^d_\theta$ and  the ordinary ones $A$ and $\epsi$ on 
$\mathbb{R}^d$. The $n$-dimensional plane 
$\mathbb{R}^d_\theta$ is described by the noncommutative
algebra of formal power series in $\theta$ with coefficients in complex valued smooth functions on $\mathbb{R}^d$.
Noncommutativity is given by the Moyal-Weyl star
product, with the following conventions:
\begin{equation}
  (f\star g)(\sigma) =\, \exp \Bigl(i\pi\theta^{\mu\nu}\frac{\partial}{\partial \sigma^\mu} \frac{\partial}{\partial \rho^\nu} \Bigr)\,f(\sigma)g(\rho)|_{\sigma =\rho}\;,
\end{equation}
so that $[\sigma^\mu,\sigma^\nu]_\star = 2\pi i\theta^{\mu\nu}$ (with $\theta^{\mu\nu} = -\theta^{\nu\mu}$). We adopt conventions typically used in the literature on tori, which differs from conventions used for the noncommutative plane by a factor of $2\pi$ (i.e. $\vartheta^{\mu\nu} = 2\pi \theta^{\mu\nu}$ with $\vartheta^{\mu\nu}$ the noncommutativity parameter of \cite{Seiberg:1999vs}).
The Seiberg--Witten map relates the noncommutative gauge
potential $\SWA$  and the noncommutative gauge
parameters $\epsihat$ to  the ordinary $A$ and $\epsi$ so as to satisfy:
 \eq{
 \Ahat (A + \de_\epsi A) = \Ahat (A) + \dehat_\epsihat \Ahat (A) \label{SWcondition}
 }
 with 
  \eqa 
   & &
  \de_\epsi A_\mu = \part_\mu \epsi - i A_\mu 
      \epsi+ i \epsi A_\mu, \\
      & &
  \dehat_\epsihat \Ahat_\mu = \part_\mu \epsihat - i \Ahat_\mu \star     
      \epsihat + i \epsihat \star \Ahat_\mu.\label{3.44h}
      \ena

\noindent Condition (\ref{SWcondition}) states that the dependence of the
noncommutative gauge field on the ordinary one is such that ordinary gauge variations of $A$ inside $\Ahat(A)$ produce the noncommutative
gauge variation of $\Ahat$.  In a gauge theory physical quantities are
gauge invariant: they do not depend on the gauge potential but on
the gauge equivalence class of the potential given. 
The Seiberg--Witten map relates the noncommutative gauge fields to the
commutative ones by requiring noncommutative fields  to have the same gauge equivalence
classes as the commutative ones. 
Equation  (\ref{SWcondition}) can be solved order by order in $\theta$
yielding
$\Ahat$ and $\epsihat$ as  power series in $\theta$:  
 \eqa 
    \Ahat (A) &=& A + A^1 (A)  + A^2 (A) + \cdots + A^n (A)+ \cdots  \\
     \epsihat (\epsi, A)  &=&  \epsi + \epsi^1 (\epsi, A)+ \epsi^2 (\epsi, A)+ \cdots + 
     \epsi^n (\epsi, A)+ \cdots 
     \ena    
 \noi  where $A^n (A)$ and $\epsi^n (\epsi, A)$  are of order $n$ in $\theta$. Note that  $\epsihat$
depends on the ordinary $\epsi$ and also on $A$.

The Seiberg--Witten condition (\ref{SWcondition}) holds for any value of the
noncommutativity parameter $\theta$.
If we consider it at $\theta'$ and at $\theta$ we easily obtain 
that gauge equivalence classes of the $\theta'$-noncommutative theory
have to correspond to gauge equivalent classes of the
$\theta$-noncommutative theory, i.e., we generalize
(\ref{SWcondition}) to
\begin{equation}\label{SWconditionp}
\SWAp_\kappa(\SWA+\SWV_{\SWE} \SWA)=\SWAp_\kappa(\SWA) -
\SWVp_{\!\!\;\SWEp}\SWAp_{\kappa\:\!}(\SWA)~,
\end{equation}
where we denoted by $\star'$, $\SWAp$, $\SWEp$, $\SWVp_{\!\!\:\SWEp}$
the star product, the gauge potential, the gauge parameter and the gauge variation: 
$\SWVp_{\!\!\;\SWEp}\SWAp_\kappa=\partial_\kappa\SWEp-i\SWAp_\kappa\star'\SWEp+i\SWEp\star'\SWAp_\kappa$ at
noncommutativity parameter $\theta'$.
By considering $\theta$ and $\theta'$ infinitesimally close, so that $\theta' = \theta + \delta \theta$ and $\SWAp = \SWA + \delta \theta^{\mu\nu} \frac{\partial \SWA}{\partial \theta^{\mu\nu}}$ (we use the convention $\frac{\partial}{\partial \theta^{\mu\nu}}$ independent from $\frac{\partial}{\partial \theta^{\nu\mu}}$ and hence we sum over all $\mu$,$\nu$ indices), in \cite[\S 3.1]{Seiberg:1999vs} it is shown that if $\Ahat$ and
$\epsihat$  solve the differential equations
   \eqa
 & & { \part \over \part \theta^{\mu\nu}} \Ahat_\kappa = -
 \frac{\pi}{4} \Big(\{ \Ahat_{\mu}, \part_{\nu } \Ahat_\kappa +
 \Fhat_{\nu\kappa} \}_\star    +   \{ \Ahat_{\nu}, \part_{\mu } \Ahat_\kappa +
 \Fhat_{\mu\kappa} \}_\star \Big)   \label{one} \\
 & &  { \part \over \part \theta^{\mu\nu}} \epsihat =  -
 \frac{\pi}{4} \Big(\{  \Ahat_{\mu}, \part_{\nu} \epsihat \}_\star  +  \{  \Ahat_{\nu}, \part_{\mu} \epsihat \}_\star\label{two}\Big)
 \ena
  with the definitions
  \eqa
  & &
\Fhat_{\mu\nu} := \part_\mu \Ahat_\nu -  \part_\nu \Ahat_\mu - i \Ahat_\mu \star \Ahat_\nu
    + i \Ahat_\nu \star \Ahat_\mu \label{Fhatmunu} \\
   & &
    \{f,g\}_\star := f \star g + g \star f
    \ena
then $\SWAp(\SWA)$ and $\SWEp(\SWE,\SWA)$
satisfy also the Seiberg--Witten condition 
(\ref{SWconditionp}) for arbitrary  values of $\theta'$ and $\theta$, and in
particular, therefore, solve the Seiberg--Witten condition (\ref{SWcondition}).

   The differential equations (\ref{one}) and (\ref{two}) admit solutions in terms of formal power series in $\theta$, they are given recursively by \cite{Ulker:2007fm}
    \eqa
     & &
     A^{n+1}_\mu = -{\pi \over 2(n+1)} \theta^{\rho\sigma} \{\Ahat_\rho, \part_\sigma \Ahat_\mu   
     + \Fhat_{\sigma\mu} \}^n_\star\,, \label{An+1}\\
      & & \epsi^{n+1}=  -{\pi \over 2(n+1)} \theta^{\rho\sigma} \{\Ahat_\rho, \part_\sigma \epsihat \}^n_\star\,,\label{rtwo}
       \ena
   \noi where  $ \{ \fhat, \ghat \}^n_\star $ is the $n$-th  order term in $ \{ \fhat, \ghat \}_\star $, so that for example
     \eq{
     \{\Ahat_\rho, \part_\sigma \epsihat \}^n_\star \equiv \sum_{r+s+t=n} (A^r_\rho \star^s \part_\sigma \epsi^t +
      \part_\sigma \epsi^t \star^s A^r_\rho )\,,
       }
    \noi here $\star^s$ indicates the $s$-th order term in the star
    product expansion.\footnote{There is a  simple proof of
    \eqref{An+1}, \eqref{rtwo} \cite{Aschieri:2011ng}:
   multiplying the differential equations by $\theta^{\mu\nu}$ and analysing them
   order by order yields
   \eqa & &\theta^{\mu\nu}  { \part \over \part \theta^{\mu\nu}} A_\rho^{n+1} = (n+1) A_\rho^{n+1}= - \frac{\pi}{2}  \theta^{\mu\nu}  \{ \Ahat_{\mu}, \part_{\nu } \Ahat_\rho + \Fhat_{\nu\rho} \}_\star^n~,~~\nonumber\\
   & &\theta^{\mu\nu}  { \part \over \part \theta^{\mu\nu}} \epsi^{n+1}=  (n+1) \epsi^{n+1}= - \frac{\pi}{2} \theta^{\mu\nu}  \{  \Ahat_{\mu}, \part_{\nu} \epsihat \}_\star^n\nonumber
  \ena
\noindent since $A_\rho^{n+1}$ and $\epsi^{n+1}$ are homogeneous functions of $\theta$ of order 
$n+1$. }
\sk
Similar considerations hold
for matter fields $\phi$  transforming
in the fundamental or in the adjoint representation of the gauge group.
The Seiberg--Witten condition reads \cite{Jurco:2001rq}
\begin{equation}\label{SWconditionmatter}
\phihat(A+\delta_\epsi 
A, \phi+\delta_\epsi \phi) =\phihat(A,\phi)+\hat\delta_\epsihat\phihat(A,\phi)\,,
\end{equation}
or more generally,
\begin{equation}\label{SWconditionmatterp}
\SWPp(\SWA+\delta_{\SWE} \SWA, \SWP+\delta_{\SWE} \SWP) =\SWPp(\SWA,
\SWP)+\SWVp_{\SWEp}\SWPp(\SWA,\SWP)\,,
\end{equation}
and it is satisfied if the matter fields solve the differential equation
\eq{\label{matterfund}
 \delta \theta^{\mu\nu} \frac{\partial \SWP}{\partial \theta^{\mu\nu}}
 =-\frac{\pi}{2} \delta\theta^{\mu\nu}\SWA_\mu \star
 (\partial_\nu \SWP + D_\nu \SWP )\; ~~~~~~~~~~ {\rm
   fundamental ~rep.,~ i.e., ~}\dehat_\epsihat \phihat = i \epsihat \star \phihat\,, ~~~~
}
\eq{\label{matteradj}
 \delta \theta^{\mu\nu} \frac{\partial \SWPA}{\partial \theta^{\mu\nu}}
 &=-\frac{\pi}{2} \delta\theta^{\mu\nu}\bigl\{\SWA_\mu \,, (\partial_\nu
 \SWPA + D_\nu \SWPA )\bigr\}_\star\;  ~~~~~~ {\rm
   adjoint ~rep.,~ i.e.,~ }\dehat_\epsihat \Psihat = i \epsihat \star \Psihat - i \Psihat \star \epsihat\,. 
}
The explicit solutions order by order in $\theta$ are
\eqa     & & \phi^{n+1}= -{\pi \over 2(n+1)} \theta^{\mu\nu} \left(\Ahat_\mu \star (\part_\nu \phihat
     + D_\nu \phihat) \right)^n ~~~~~~~~~~{\rm (fundamental)}\label{phirec} \\
 & & \Psi^{n+1}= -{\pi\over 2(n+1)} \theta^{\mu\nu} \{\Ahat_\mu, \part_\nu \Psihat
     + D_\nu \Psihat \}^n_\star ~~~~~~~~~~~~~~{\rm (adjoint)}
\label{phiadrec}
\ena
where
\eq{D_\nu \phihat = \part_\nu \phihat -i  \Ahat_\nu \star \phihat ~~,~~~~ 
D_\nu \Psihat = \part_\nu \Psihat -i  \Ahat_\nu \star \Psihat +
i \Psihat \star \Ahat_\nu}
are the covariant derivative in the fundamental 
and  in the adjoint.

\subsection{Ambiguities in the Seiberg--Witten map}\label{sec:ambig}

The solution to the Seiberg--Witten conditions (\ref{SWcondition}), (\ref{SWconditionmatter}) is not unique. For example
      if $\hat A_\mu$ is a solution, any  noncommutative gauge
      transformation of $\hat A_\mu$ gives another solution. Another
      source of ambiguities is that of field redefinitions of the
      gauge potential (e.g., if $\hat A_\mu$ is a solution then so is
      $\hat A_\mu+\theta^{\rho\sigma}\theta^{\lambda\eta}\hat F_{\rho
        \lambda}\star D_\sigma \hat F_{\eta\mu}$).
We present a novel study of these ambiguities by considering the
freedom in modifying the differential equations (\ref{one}),
(\ref{two}) and (\ref{matterfund}) leading to the Seiberg--Witten
conditions.

We generalize the 
Seiberg--Witten equations allowing for three extra terms
$\hat D_{\mu\nu\rho}(\SWA),  \hat E_{\mu\nu}(\SWE,\SWA)$ and
$\hat C_{\mu\nu}(\SWP,\SWA)$ or $\hat C_{\mu\nu}(\SWPA,\SWA)$ (for the
fundamental or adjoint representation) that are a priori arbitrary
functions of their arguments and derivatives thereof, that are (formal) power series in $\theta$ and 
that are antisymmetric in the $\mu,\nu$
indices. We hence consider the equations
\begin{eqnarray}
  \label{extendedSWA}
  \!\!\!\!\!\delta^\theta \SWA_\kappa\,=\, \delta \theta^{\mu\nu} \frac{\partial
  \SWA_\kappa}{\partial \theta^{\mu\nu}}&\!\!
                                          =&\!\!\frac{\pi}{2}\delta\theta^{\mu\nu}\Bigl(\SWA_\mu\star(\partial_\nu
                                             \SWA_\kappa +
                                             \SWF_{\nu\kappa}) +
                                             (\partial_\nu \SWA_\kappa
                                             + \SWF_{\nu\kappa}) \star
                                             \SWA_\mu +
                                             \hat D_{\mu\nu\kappa}(\SWA)\Bigr)\,,~~~~~~\\
\label{extendedSWE}
   \!\!\!\!\! \delta^\theta\SWE\,=\,\delta \theta^{\mu\nu}
  \frac{\partial\SWE}{\partial \theta^{\mu\nu}} &\!\!=&\!\!\frac{\pi}{2}
                                                      \delta\theta^{\mu\nu}\Bigl(\partial_\mu
                                                        \SWE \star
                                                        \SWA_\nu +
                                                        \SWA_\nu
                                                        \star \partial_\mu
                                                        \SWE +
                                                       \hat  E_{\mu\nu}(\SWE,\SWA)\Bigr)\,,\\
\label{extendedSWP}
    \!\!\!\!\!  \delta^\theta\SWP\,=\,\delta \theta^{\mu\nu} \frac{\partial
  \SWP}{\partial \theta^{\mu\nu}} &\!\!=&\!\!-\frac{\pi}{2}
                                          \delta\theta^{\mu\nu}\Bigl(\SWA_\mu
                                          \star \partial_\nu \SWP +
                                          \SWA_\mu \star D_\nu \SWP +
                                          \hat
                                          C_{\mu\nu}(\SWP,\SWA)\Bigr)\,,\\
   \!\!\!\!\!   \delta^\theta\SWPA\,=\,\delta \theta^{\mu\nu} \frac{\partial
  \SWPA}{\partial \theta^{\mu\nu}} &\!\!=&\!\!-\frac{\pi}{2}
                                          \delta\theta^{\mu\nu}\Bigl(\SWA_\mu
                                          \star (\partial_\nu \SWPA +
                                          D_\nu \SWPA) + (\partial_\nu \SWPA +
                                          D_\nu \SWPA) \star \SWA_\mu +
                                          \hat
                                           C_{\mu\nu}(\SWPA,\SWA)\Bigr)\nonumber\\
\label{extendedSWPA}
\end{eqnarray}
and observe that $\hat E_{\mu\nu}$ must be linear in $\SWE$ since all terms in
(\ref{extendedSWE}) but $\hat E_{\mu\nu}$ are linear in $\SWE$,  similarly
$\hat C_{\mu\nu}$ must be linear in $\SWP$ because of the linearity in
$\SWP$ of all other terms in (\ref{extendedSWP}), and similarly for
$\hat C_{\mu\nu}(\SWPA,\SWA)$ in \eqref{extendedSWPA}. We further constrain the dependence of 
$\hat D_{\mu\nu\kappa}, \hat  E_{\mu\nu}$, and
$\hat C_{\mu\nu}$ on $\theta$, $\SWA$, $\SWE$ and their derivatives 
by requiring the Seiberg--Witten conditions  (\ref{SWconditionp}),
(\ref{SWconditionmatterp}). 
Let's first consider condition (\ref{SWconditionp}), we use  (\ref{extendedSWA}) in
$\SWAp(\SWA+\SWV_{\SWE} \SWA)=\SWA+\SWV_{\SWE} \SWA+\delta^\theta\SWA_{\,}(\SWA+\SWV_{\SWE} \SWA)$
as well as in $\SWAp(\SWA)=\SWA+\delta^\theta \SWA$, then we use 
$f\star' h=f\star h+i\pi\delta\theta^{\mu\nu} \partial_\mu
f\star\partial_\nu h+{\cal O}(\theta^2)$ and (\ref{extendedSWE})
in $\SWVp_{\!\!\;\SWEp}\SWAp_\kappa=\partial_\kappa\SWEp-i\SWAp_\kappa\star'
\SWEp+i\SWEp\star'\SWAp_\kappa$, 
and finally recall that for $\hat D_{\mu\nu\kappa}=\hat E_{\mu\nu}=0$ the equation is
satisfied, we thus obtain the condition
\begin{equation}\label{constrD}
\hat D_{\mu\nu\kappa}(\SWA+\SWV_{\SWE} \SWA)-\hat
D_{\mu\nu\kappa}(\SWA) - i[\SWE,\hat D_{\mu\nu\kappa}(\SWA)]_{\star}
=-  D_\kappa \hat E_{\mu\nu}(\SWE,\SWA)\;.
\end{equation}
Similarly, the Seiberg--Witten condition (\ref{SWconditionmatterp})  
for matter fields in the fundamental and in the adjoint constraints $\hat C_{\mu\nu}$ to satisfy
\begin{equation}\label{constrC}
\begin{matrix}&\hat C_{\mu\nu}(\SWA+\SWV_{\SWE}\SWA,\SWP+\SWV_{\SWE}\SWP) -
 \hat C_{\mu\nu}(\SWA,\SWP) - i\SWE\star \hat C_{\mu\nu}(\SWA,\SWP) =
 -i\hat E_{\mu\nu}(\SWE,\SWA)\star \SWP\;,\\[.8em]
&\hat C_{\mu\nu}(\SWA+\SWV_{\SWE}\SWA,\SWPA+\SWV_{\SWE}\SWPA) -
 \hat C_{\mu\nu}(\SWA,\SWPA)  -i[\SWE,\hat C_{\mu\nu}(\SWA,\SWPA)]_\star =
  -i[\hat E_{\mu\nu}(\SWE,\SWA),\SWPA]_\star\;,
\end{matrix}
\end{equation}
where $[f,g]_\star= f\star g-g\star f$ is the star commutator.

An immediate comment is that any $\hat D_{\mu\nu\kappa}$ and $\hat C_{\mu\nu}$ covariant under gauge transformations solve (\ref{constrD}), (\ref{constrC}) with
$\hat E_{\mu\nu}=0$.

\sk 
In summary, we have shown that the most general solution
$\SWA(A)$, $\SWE(\varepsilon,A)$, $\SWP(A,\phi)$, $\SWPA(A,\Psi)$ of the Seiberg--Witten
conditions (\ref{SWcondition}), (\ref{SWconditionmatter}) is given by the differential equations 
(\ref{extendedSWA})-(\ref{extendedSWPA}) where $\hat D,\hat E, \hat C$
are constrained by (\ref{constrD}) and (\ref{constrC}).
Further constraints on the $\hat D,\hat E, \hat C$ terms are obtained by requiring that Seiberg--Witten map respects hermiticity and charge conjugation in the sense that
the hermiticity and charge conjugation properties of the commutative
fields imply those of the noncommutative fields \cite{Aschieri:2002mc, Aschieri:2011ng}.
\sk
It is useful to compare these results with the previous ones in the
literature. 
To this aim we constrain $\hat D,\hat E, \hat C$
to be $\star$-polynomials in
the variables  $\SWA_\mu, D_\mu ,\SWE, \SWP,  \SWPA,
\theta^{\mu\nu}, \partial_\mu$, that is in the variables
$\SWA_\mu, D_\mu ,\SWE, \SWP,  \SWPA, \theta^{\mu\nu}$,
indeed the partial derivatives $\partial_\mu$ can be always expressed
in terms of the covariant derivatives and the gauge potentials. We then define
the operator $\CL_{\SWE}$ to satisfy the Leinbiz rule and to be the adjoint action of $\SWE$ on fields
in the adjoint, and the fundamental action otherwise: 
$\CL_{\SWE}(A_\mu)=[\SWE,\SWA_\mu]_\star~,~
\CL_{\SWE}(D_\mu)=[\SWE, D_\mu]_\star~,~
\CL_{\SWE}(\SWE)=
0\,,~ 
\CL_{\SWE}(\SWP)=\SWE\star\SWP\,,\CL_{\SWE}(\SWPA)=[\SWE,\SWPA]_\star
$.
Next, as in \cite{Asakawa:1999cu}, we introduce the operator 
\eq{\delta'_{\SWE}=\partial_\mu\SWE\frac{\partial}{\partial\SWA_\mu} }
that acts only on the gauge
potential $\SWA_\mu$  and does not act on the
covariant derivatives: $\delta'_{\SWE} (D_\mu)=0$. This operator
just substitutes $\SWA_\mu$ with $\partial_\mu\SWE$ and 
satisfies the Leibniz rule. We can consider (\ref{constrD})-(\ref{constrC}) as a combinatorial problem in the words (symbols)
$\SWA_\mu,D_\mu, \SWP, \SWPA,\SWE, \theta^{\mu\nu}$. Since 
$\delta'_{\SWE}=\SWV_{\SWE}-i\CL_{\SWE}$ (as is easily seen on the generators
$\SWA_\mu,D_\mu, \SWP,\SWE,  \theta^{\mu\nu}$, and then
recalling the Leibniz rule) we can rewrite  (\ref{constrD}) and
(\ref{constrC}) as
\begin{equation}\label{constrD'}
\delta'_{\SWE}\hat D_{\mu\nu\kappa}(\SWA)=-  D_\kappa \hat E_{\mu\nu}(\SWE,\SWA)\;,
\end{equation}
\begin{equation}\label{constrC'}
\delta'_{\SWE}\hat C_{\mu\nu}(\SWA, \SWP)=
 -i\hat E_{\mu\nu}(\SWE,\SWA)\star \SWP~~,~~~
\delta'_{\SWE}\hat C_{\mu\nu}(\SWA, \SWPA)=
 -i[\hat E_{\mu\nu}(\SWE,\SWA), \SWPA]_\star\;.
\end{equation}
If we constrain  $\hat D,\hat E, \hat C$ to be words
($\star$-polynomials)  only of $\SWA_\mu,D_\mu, \SWP, \SWPA, \SWE$ and
not of the other letters $\theta^{\mu\nu}$ we then have a finite
number of letters that can match the
dimensionality of $\hat D,\hat E, \hat C$ and the linearity
constraints on $\SWE$, $\SWP$ and $\SWPA$ discussed immediately after
(\ref{extendedSWPA}). In $\mathbb{R}^d$ the star product 
$f\star g$  is invariant under constant
$GL(d,\mathbb{R})$ coordinate transformations; if we do not fix the gauge potential $\SWA_\mu$ it is natural to 
implement this symmetry also in the Seiberg--Witten map, i.e., to ask
all  expressions to be tensorial under constant coordinate
transformations.
Since $\theta^{\mu\nu}$, $\SWA_\mu$, $\partial_\mu$ and $D_\mu$
transform covariantly, this is achieved, as usual, by contracting indices
tensorially and by  matching the index structure  of $\hat D,\hat E,
\hat C$.

We have written the most general linear combination of letters for
$\hat D$,  for $\hat E$ and $\hat C$, substituted them in (\ref{constrD'}) and
(\ref{constrC'}) and shown (using also Bianchi identities that are 
indeed combinatorial identities) that the most general
$GL(d,\mathbb{R})$ covariant solution
with no explicit dependence on $\theta^{\mu\nu}$ is 
\begin{align}\label{DEsemplice}
&\hat D_{\mu\nu\kappa}=\alpha D_\kappa \SWF_{\mu\nu}+\beta
D_\kappa[\SWA_\mu,\SWA_\nu]_\star~~,~~~~\hat E_{\mu\nu}=2\beta [\partial_\mu\SWE,\SWA_\nu]_\star\;,\\
&\label{Cfund}\hat C_{\mu\nu} = -2i\beta[\SWA_\mu,\SWA_\nu]_\star \star \SWP + \gamma
  \SWF_{\mu\nu}\star \SWP 
~~~~~~~~~~~~~~~~~~~~~~~~~~~~~~~~~~{\rm (fundamental)}\;
\\
&\label{Cadj}\hat C_{\mu\nu} = -2i\beta[[\SWA_\mu,\SWA_\nu]_\star, \SWPA]_\star + \gamma'
  \SWF_{\mu\nu}\star \SWPA +\tilde\gamma \SWPA \star \SWF_{\mu\nu}
~~~~~~~~~~\:~~~~~~ {\rm (adjoint)}\;
\end{align}
with $\alpha,\beta,\gamma, \gamma'$ and $\tilde\gamma$ arbitrary constants (terms like
$D_\mu D_\nu \SWP$ are proportional to $F_{\mu\nu}$ due to
antisymmetry in $\mu,\nu$).
This shows that the results in \cite{Asakawa:1999cu} and in \cite{Suo:2001ih}
are the most general solutions covariant under constant $GL(d,\mathbb{R})$ coordinate transformations and that do not depend explicitly on $\theta^{\mu\nu}$.
\sk

If we relax the covariance under  constant
$GL(d,\mathbb{R})$ coordinate transformations we obtain further
solutions, in particular we can generalize \eqref{Cfund} to
\eq{\hat C_{\mu\nu} = -2i\beta[\SWA_\mu,\SWA_\nu]_\star \star \SWP + \gamma
  \SWF_{\mu\nu}\star \SWP + \rho D_\mu D_\nu \SWP
  ~\mbox{ { for $\mu<\nu$,}
  and }~ \hat C_{\nu\mu}:=-\hat C_{\mu\nu}\label{CDD}}
(and similarly for the adjoint case). Here $\rho$ is a constant, and
$GL(d,\mathbb{R})$ covariance is broken because  $\hat C_{\nu\mu}$ has
not the term $\rho D_\nu D_\mu\phi$. Actually, as noted after \eqref{constrC}, we can add to $\hat C_{\mu\nu}$ an
arbitrary term covariant under gauge transformations, like e.g. 
the linear combination $\eta D_\mu D_\mu \SWP+\omega D_\nu D_\nu \SWP$ that even more clearly
breaks $GL(d,\mathbb{R})$ coordinate transformations. When considering
Seiberg--Witten maps on the $d$-dimensional torus 
$GL(d,\mathbb{R})$ is anyhow broken, and  $\hat C_{\mu\nu}$ terms like those
listed can be considered, and, as we will see, are important in order to
explicitly solve the Seiberg-Witten differential equations.
\sk
We further remark that it is very natural to allow for polynomials that depend
explicitly also on $\theta$ (for example the field redefinition 
$\SWA_\mu\to \hat A_\mu+\theta^{\rho\sigma}\theta^{\lambda\eta}\hat
F_{\rho  \lambda}\star D_\sigma \hat F_{\eta\mu}$ is of this kind). In this
case (\ref{DEsemplice}) is not the most general solution.
As an example consider 
$
\hat D_{\mu\nu\kappa}=\gamma(\theta^{\rho\sigma}\SWF_{\rho\sigma})^pD_\kappa\SWF_{\mu\nu}$
(with $\gamma$ arbitrary constant, $p$ integer) and $\hat E=0$.

\section{Noncommutative tori}
Noncommutative tori are among the most studied objects in
noncommutative geometry. In physics, they serve as key examples to
study T-duality. We review the classification of bundles  (finitely
generated projective modules) with connections on noncommutative tori. 

\subsection{$U(n)$-vector bundles on commutative tori}\label{UNCT}
We consider hermitian $n$-dimensional vector bundles over the
2-dimensional torus $\TT2$, i.e.,
rank $n$ complex vector bundles canonically associated (via the
fundamental representation) to a $U(n)$-principal bundle. For short we
will call these bundles $U(n)$-vector bundles or simply $U(n)$-bundles. 

While a usual description of bundles is via local sections defined on
opens and transition functions on overlaps, since the torus $\TT2$ is given by the quotient
$\mathbb{R}^2/(2\pi\mathbb{Z})^2$, bundles on $\TT2$ are
most easily described by sections of the trivial
vector bundle on the plane 
$\mathbb{R}^2\times\mathbb{C}^n\to \mathbb{R}^2$ that obey
twisted periodicity conditions, often called boundary
conditions. These conditions can be seen as arising from the 
transition functions of the bundle on $\TT2$ in the limiting case of
opens that overlap only on the boundary of the fundamental domain
determining the torus as a quotient of $\mathbb{R}^2$. 
Let's describe the smooth sections
${\cal E}_{n,m}$ of a $U(n)$-vector bundle with topological charge $m$
\cite{tHooft:1981nnx} (we follow \cite{Taylor:1997dy},
\cite{Ho:1998hqa}, see also \cite{vanBaal:1982ag}).
Define the fundamental domain in $\mathbb{R}^2$ to
be the square of length $2\pi$, so that $\TT2= \mathbb{R}^2/(2\pi
\mathbb{Z})^2$ and functions on $\TT2$ are $2\pi$-periodic
functions on $ \mathbb{R}^2$,
and define the $U(n)$-matrix valued functions (transition functions)
\eq{\label{Om12}\Omega_1(\sigma^2) = e^{im\sigma^2/n}U~,~~~\Omega_2(\sigma^1) = V~,}
where $U$ and $V$ are  the clock and shift $U(n)$-matrices with entries
$U_{kl} = e^{2\pi i km/n}\delta_{kl}, V_{kl}=\delta_{(k+1)l}$ for $k<n$, $V_{n,l}=\delta_{1l}$. 
Let $\phi$ be an $n$-dimensional vector of complex valued functions on
$\mathbb{R}^2$ (a section of the trivial bundle
$\mathbb{R}^2\times\mathbb{C}^n\to \mathbb{R}^2$), 
then the
twisted periodicity conditions defining the sections $\phi \in {\cal
  E}_{n,m}$ is the system of $2n$ equations in $\mathbb{R}^2$:
\eq{  \label{classbd}
    \phi(\sigma^1 + 2\pi,\sigma^2) =\,& \Omega_1(\sigma^2)\phi(\sigma^1,\sigma^2)\;,\\[.2em]
    \phi(\sigma^1,\sigma^2 + 2\pi)=\,&\Omega_2(\sigma^1)\phi(\sigma^1,\sigma^2)\;,
}
where the $U(n)$-matrix valued functions $\Omega_\al$ satisfy the cocycle condition
  \begin{equation}
    \label{classcy}
    \Omega_1(\sigma^2 + 2\pi)\Omega_2(\sigma^1)=\Omega_2(\sigma^1+2\pi)\Omega_1(\sigma^2)\;.
  \end{equation}
We remark that \eqref{classbd} and \eqref{classcy} are equations for functions
($\phi_k, {\Omega_1}_{kl} , {\Omega_2}_{kl})$ on $\mathbb{R}^2$. More
geometrically, $\phi$ are sections of $\mathbb{R}^2\times\mathbb{C}^n\to \mathbb{R}^2$ (the trivial $U(n)$-vector bundle on $\mathbb{R}^2$), and
$\Omega_\al:\mathbb{R}^2\to U(n)\subset M_{n\times n}(\mathbb{C})$, $\al=1,2$,
are endomorphisms of this vector bundle that transform sections to
sections $\phi\mapsto \Omega_\al\phi$ (they are sections of the
endomorphism bundle ${\rm End}(E)$). Denoting by
$E=C^\infty({\mathbb{R}^2})^{\,\oplus n}$ the module of sections of
$\mathbb{R}^2\times\mathbb{C}^n\to \mathbb{R}^2$ we write
$\Omega_\al\in \textrm{End}(E)=C^\infty({\mathbb{R}^2})^{\,\oplus
  (n\times n)}$, where in the last equality we used that endomorphisms
of a trivial (smooth)  $n$-dimensional vector bundle are just (smooth) maps from the base space to 
linear maps on the fibre. Thus $\Omega_\al$ are sections of the
complexified adjoint bundle, for short, sections in the adjoint.

An explicit description of the sections $\phi\in {\cal E}_{n,m}$
solving \eqref{classbd} 
was provided in \cite{Ganor:1996zk} and requires defining: 
\eq{\label{AB}
\AAA:=
\frac{m}{n}\Bigl(\frac{\sigma^2}{2\pi} + k + ns\Bigr) +j~,~~~
\BBB:=i\sigma^1~,
} 
then an arbitrary section $\phi=(\phi_k)_{k=1,...n}\in  {\cal E}_{n,m}$ is given by 
\begin{equation}
  \label{classec}
  \phi_k(\sigma^1,\sigma^2)= \underset{s\in\mathbb{Z}}{\sum}\overset{m}{\underset{j=1}{\sum}}\, e^{\AAA\BBB}\tilde \phi_j(\frac{n}{m}\AAA)\;,
\end{equation}
where $m$ is the topological charge, $k={1,\dots n}$ and $\tilde
\phi_j$ are $m$ arbitrary complex valued Schwartz functions on
$\mathbb{R}$. More elegantly they define a  Schwartz function
$\tilde\phi: \mathbb{R}\times \mathbb{Z}_m\to \mathbb{C}$. 

In an analogous manner, covariant derivatives $D_\mu$ on the module of
sections ${\cal E}_{n,m}$ are described by 
covariant derivatives of the trivial $U(n)$-bundle on $\mathbb{R}^2$
satisfying the appropriate periodicity:
\eq{\label{connperiodicity}
  D_\mu\vert_{(\sigma^1+2\pi,\sigma^2)} &=\,\Omega_1(\sigma^2)D_\mu\vert_{(\sigma^1,\sigma^2)}\Omega_1^{-1}(\sigma^2)\;,\\[.2em]
  D_\mu\vert_{(\sigma^1,\sigma^2+2\pi)} &=\,\Omega_2(\sigma^1)D_\mu\vert_{(\sigma^1,\sigma^2)}\Omega_2^{-1}(\sigma^2)\;.
}

A particular solution to these conditions, suitable for later
generalizations to the noncommutative torus, is the one with only a
single non-vanishing component of the gauge potential, proportional to
the unit $n\times n$-matrix:
\begin{equation}\label{CovDer}
  D_1 =\,\partial_1-iA_1=\,\partial_1\;,\quad D_2= \partial_2 - i\,A_2=\,\partial_2 - i\frac{m}{n} \frac{\sigma^1}{2\pi}\mathbf{1}\;. 
\end{equation}
The field strength is given by
$F=F_{12}= i[D_1,D_2]= \tfrac{1}{2\pi}\tfrac{m}{n}\mathbf{1}$ and indeed the topological charge of ${\cal
  E}_{n,m}$ is $m$: $\frac{1}{2\pi}\int \textrm{tr}(F) d\sigma^1d\sigma^2=m$.

The set of sections ${\cal E}_{n,m}$ is a module over the algebra ${\cal C}^\infty(\TT2)$ of smooth functions on $\TT2$, the action
$\phi\mapsto \phi f$ (it is customary to multiply functions from the
right rather than the left) is simply the product of the vector $\phi$
with the periodic function $f$. It is immediate to check that $\phi f$
satisfies \eqref{classbd} if so does $\phi$.\footnote{Since the 
bundle is a positive definite hermitian complex vector bundle and
the algebra of continuous functions $C(\TT2)$ is a $C^\star$-algebra we also have that ${\cal
  E}_{n,m}$ is a Hilbert module over $C(\TT2)$.}

Moreover, ${\cal
  E}_{n,m}$ is a module with respect to the algebra of endomorphisms
of ${\cal E}_{n,m}$ itself. By definition, an endomorphism is a
fiberwise linear map on the vector bundle that
acts as the identity on the base space (the torus), and hence it is a
linear map on sections that is the identity on functions on the torus: $\phi f\mapsto \Psi(\phi
f)=\Psi(\phi) f$. Therefore the $\textrm{End}({\cal E}_{n,m})$- and ${
  C}^\infty(\TT2)$-actions commute, and ${\cal E}_{n,m}$ is a bimodule with
respect to the algebras $\textrm{End}({\cal E}_{n,m})$ and ${
  C}^\infty(\TT2)$, we write this as
\begin{equation}\label{EndEMT}
  {\cal E}_{n,m} \in \; \prescript{}{\textrm{End}({\cal E}_{n,m})}{\mathcal{M}}_{{C}^\infty(\TT2)}\;.
\end{equation}
We conclude with an explicit description
of the algebra of endomorphisms of ${\cal
  E}_{n,m}$. First of all, the algebra of endomorphisms of 
$E$, the module of sections of $\mathbb{R}^2\times\mathbb{C}^n\to
\mathbb{R}^2$, as already observed, is given by all $n\times n$ matrix valued functions on
$\mathbb{R}^2$, $\textrm{End}(E)=\{ \Psi: \mathbb{R}^2\to M_{n\times
  n}(\mathbb{C})\}$; the action on sections is simply the matrix
transformation $\phi\mapsto \Psi\phi$. The
algebra $\textrm{End}({\cal{E}}_{n,m})$ is the subalgebra of
$\textrm{End}(E)$ that preserves the twisted periodicity conditions \eqref
{classbd}: if $\phi$ satisfies \eqref
{classbd} then so does $\Psi\phi$. That is, endomorphisms of ${\cal
  E}_{n,m} $ are endomorphisms of $E$ that satisfy the
twisted boundary conditions in the adjoint representation:
\eq{\label{endperiodicity}
  \Psi(\sigma^1+2\pi,\sigma^2)&=\,\Omega_1(\sigma^2)\Psi(\sigma^1,\sigma^2)\Omega_1^{-1}(\sigma^2)\;,\\[.2em]
  \Psi(\sigma^1,\sigma^2+2\pi) &=\,\Omega_2(\sigma^1)\Psi(\sigma^1,\sigma^2)\Omega_2^{-1}(\sigma^2)\;.
}
We see that they are the sections of the adjoint $U(n)$-vector bundle
on the torus (i.e., the complex vector bundle
canonically associated via the adjoint representation, rather than the
fundamental, to the $U(n)$-principal bundle).

For $m$ and $n$ coprime, the 
algebra $\textrm{End}({\cal E}_{n,m})$ is generated by the
$U(n)$-valued functions on $\mathbb{R}^2$ (cf. \cite{Ho:1998hqa},\cite{Morariu:1998qm}):
\begin{equation}\label{commgenerat}
  Z_1 =\,e^{i\sigma^1/n}V^b\;,\qquad Z_2=\, e^{i\sigma^2/n}U^{-b}\;,
\end{equation}
where $b\in \mathbb{Z}$ is such that $an-bm=1$ with  $a\in \mathbb{Z}$. (If $a'$ and $b'$ are another couple satisfying $a'n-b'm=1$, the algebra is
the same since $m,n$ coprime implies $(a-a')=ms,  (b-b')=ns, s\in
\mathbb{Z}$ and we have $U^n=V^n=1$). 
It is easy to see that $Z_1Z_2=e^{2\pi i b/n}Z_2Z_1$, henceforth
$\textrm{End}({\cal E}_{n,m})=T_{b/n}$ the algebra of the noncommutative
torus with rational noncommutativity parameter $b/n$.
\\

\subsection{$U(n)$-vector bundles on noncommutative tori}\label{UNNC}
The description of bundles on the torus via modules on the algebra $C^\infty(\TT2)$ of smooth functions on the torus, and the
description of these modules via vector valued functions $\phi$ on
$\mathbb{R}^2$ (sections of the trivial $U(n)$-vector bundle on
$\mathbb{R}^2$) satisfying twisted periodicity conditions determined by
the two matrix valued functions 
$\Omega_\al:\mathbb{R}^2\to U(n)$ (bundle endomorphisms) is
particularly well suited to noncommutative generalizations.

Consider as in Section \ref{SWl} the noncommutative Moyal-Weyl algebra
$\mathbb{R}^2_\theta$, with
$\theta=\theta^{12}=-\theta^{21}$,
\begin{equation}
  \label{starconv}
  (f\star g)(\sigma^1,\sigma^2) =\,fg(\sigma^1,\sigma^2) + i\pi\theta\bigl(\partial_{\sigma^1}f \,\partial_{\sigma^2} g - \partial_{\sigma^2} f \,\partial_{\sigma^1} g\bigr) + {\cal O}(\theta^2)~.
\end{equation}
The $\star$-product between $2\pi$-periodic functions on
the plane is again a  $2\pi$-periodic function, and therefore
$2\pi$-periodic functions form a subalgebra of $\mathbb{R}^2_\theta$, 
this is the noncommutative torus $T_{(-\theta)}$.
Explicitly, $T_{(-\theta)}$, for $\theta\in \mathbb{R}$, is defined as the algebra over $\mathbb{C}$ generated by two
invertible elements $U_1$, $U_2$ that satisfy the relations
\eq{U_1U_2=e^{2\pi i(-\theta)}U_2 U_1
}
with involution given by $U_1^\ast=U_1^{-1}$, $U_2^\ast=U_2^{-1}$. The
smooth noncommutative torus $T_{(-\theta)}$ is
$T_{(-\theta)}=\{\sum_{p,q\in \mathbb{Z}}a_{p,q}\,U_1^p U_2^q, a_{p,q}\in
\mathbb{C}\}$, where $a: \mathbb{Z}^2\to
\mathbb{C}\,,\;(p,q)\mapsto a_{p,q}$ are Schwarz functions on
$\mathbb{Z}^2$. We can also consider $\theta$ as a formal parameter (so
that $U_1$ and $U_2$ generate the algebra over the ring of formal
power series $\mathbb{C}[[\theta]]$).
Since in $\mathbb{R}^2_\theta$ \eq{e^{i\sigma^1}\star
  e^{i\sigma^2}=e^{2\pi i(-\theta)}e^{i\sigma^2}\star e^{i\sigma^1}~,} 
setting $U_1=e^{i\sigma^1}$, $U_2=e^{i\sigma^2}$ 
realizes this algebra as a subalgebra of $\mathbb{R}^2_\theta$. Notice that 
restricting to periodic functions allows to specialize 
the formal parameter $\theta$ to a real number.

\sk

The twisted boundary conditions \eqref{classbd} and the cocycle
conditions are equations for the functions ($\phi_k, {\Omega_1}_{kl} , {\Omega_2}_{kl}$) on $\mathbb{R}^2$ and deforming
the commutative product in the  $\star$-product we obtain the
noncommutative deformation of these conditions
\eq{
  \label{qperiodicity}
  \TSWp(\sigma^1+2\pi,\sigma^2) &= \;\Omega_1(\sigma^2)\star \TSWp(\sigma^1,\sigma^2)\;,\\[.2em]
  \TSWp(\sigma^1,\sigma^2+2\pi) &= \;\Omega_2(\sigma^1)\star \TSWp(\sigma^1,\sigma^2)\;,
}
\begin{equation}\label{Omperiodicity}
  \Omega_1(\sigma^2 + 2\pi)\star\Omega_2(\sigma^1) = \;\Omega_2(\sigma^1 + 2\pi)\star \Omega_1(\sigma^2)\;. 
\end{equation}
The solutions \eqref{Om12} of the commutative cocycle conditions are also solutions of the $\star$-cocycle condition.
The solutions $\TSWp$ to \eqref{qperiodicity} are immediately seen to
be a module with respect to the noncommutative torus subalgebra
$T_\theta\subset \mathbb{R}^2_\theta$: if and only if $f$ is a
periodic function we have that a $\TSWp$ satisfying
\eqref{qperiodicity} implies that  $\TSWp\star f$ satisfies
\eqref{qperiodicity}.
The solutions $\TSWp$ therefore span the module ${\cal E}^\theta_{n,m}$
of sections of a rank $n$ complex vector bundle on the noncommutative torus.

The explicit solution of \eqref{qperiodicity} requires the use of a normal ordered function
$E(\AAA,\BBB)$ defined by 
\begin{equation}
  \label{Efunction}
  E(\AAA,\BBB):=\,\frac{1}{1-[\AAA,\BBB]_{\star}}\,\overset{\infty}{\underset{l=0}\sum} \, \frac{1}{l!}\,\AAA^l\star \BBB^l\;,
\end{equation}
where the definition of $\AAA$ and $\BBB$ in terms of the coordinates
$\sigma^1,\sigma^2$ and the  integers $n$ and $m$ is
given in \eqref{AB}. 
For later work with the function  $E(\AAA,\BBB)$, we collect some of
its properties, whose proof follows for the relation
$[\AAA,\BBB]_\star=\frac{m}{n}\theta$ (the last two are easily derived
from the corresponding differential equation in $\lambda$).
\begin{lemma}\label{uno}
  The function $E(\AAA,\BBB)$ satisfies
  \begin{align}
    \AAA\star E(\AAA,\BBB) =\,&\frac{1}{1-c} E(\AAA,\BBB)\star \AAA\;,\\
    \BBB\star E(\AAA,\BBB) =\,&E(\AAA,\BBB)(1-c)\star \BBB\;,\\
E(-\BBB, \AAA)E(\AAA,&\BBB)=\,1\;,\\
E(\AAA+\lambda,\BBB)=\,&E(\AAA,\BBB)\star e^{\lambda B}\;,\label{EAlaB}\\
E(\AAA,\BBB+\lambda)=\,&e^{\lambda A} \star E(\AAA,\BBB)\;,
  \end{align}
  where we have set $c:=[\AAA,\BBB]_{\star} = \tfrac{m}{n}\theta$, and
  $\lambda\in \mathbb{C}$.
\end{lemma}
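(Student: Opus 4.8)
The plan is to reduce everything to the single scalar identity $[\AAA,\BBB]_\star = \tfrac{m}{n}\theta =: c$, which holds because $\AAA$ is affine in $\sigma^2$, $\BBB = i\sigma^1$, and $[\sigma^1,\sigma^2]_\star = 2\pi i\theta$ (so $[\BBB,\AAA]_\star = i\cdot\tfrac{m}{n}\cdot\tfrac{1}{2\pi}[\sigma^1,\sigma^2]_\star = \tfrac{m}{n}\theta$, with signs to be tracked). Since $c$ is a scalar (a multiple of $\theta$), it is $\star$-central, and moreover $\AAA$ and $\BBB$ satisfy the Heisenberg-type relation $\BBB\star \AAA = \AAA\star\BBB - c$. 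From here the first two identities are purely algebraic: to get $\AAA\star E = \tfrac{1}{1-c}E\star\AAA$, I would compute $\AAA\star(\AAA^l\star\BBB^l)$ and $(\AAA^l\star\BBB^l)\star\AAA$ and commute one $\AAA$ past the block $\BBB^l$ using $\BBB^l\star\AAA = (\AAA - lc)\star\BBB^l$ (an easy induction on $l$ from the Heisenberg relation), then reindex the sum; the factor $\tfrac{1}{1-c}$ emerges exactly as it must against the prefactor $\tfrac{1}{1-c}$ in the definition of $E$. The identity $\BBB\star E = E(1-c)\star\BBB$ is the mirror computation, commuting $\BBB$ leftwards past $\AAA^l$ via $\BBB\star\AAA^l = (\AAA+lc)^{-?}$\,— more precisely $\AAA^l\star\BBB = \BBB\star(\AAA+lc)^{\text{shift}}$, handled the same way.

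For the translation identities \eqref{EAlaB} and its companion, the cleanest route is the differential-equation remark the authors flag: introduce $E_\lambda := E(\AAA+\lambda,\BBB)$ and compute $\tfrac{d}{d\lambda}E_\lambda$. Using the first identity of the lemma (now with $\AAA+\lambda$ in place of $\AAA$, noting $[\AAA+\lambda,\BBB]_\star = c$ is unchanged) together with the series definition, one shows $\tfrac{d}{d\lambda}E_\lambda = E_\lambda\star B$ — here I'd differentiate $\tfrac{1}{l!}(\AAA+\lambda)^l\star\BBB^l$ termwise to get $\tfrac{1}{(l-1)!}(\AAA+\lambda)^{l-1}\star\BBB^l$, reindex, and pull out a $\BBB$ on the right using the commutation relations to recognize the result as $E_\lambda$ times $B$ up to the scalar corrections that the prefactor $\tfrac{1}{1-c}$ absorbs. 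Since at $\lambda=0$ both sides of \eqref{EAlaB} equal $E(\AAA,\BBB)$, uniqueness of solutions to the linear ODE $\tfrac{d}{d\lambda}X = X\star B$ (with $B$ fixed and $\star$-multiplication by the central/affine element $B$ being a well-behaved linear operator on the relevant Schwartz-type space) gives $E_\lambda = E(\AAA,\BBB)\star e^{\lambda B}$. The identity $E(\AAA,\BBB+\lambda) = e^{\lambda A}\star E(\AAA,\BBB)$ is obtained the same way, differentiating in $\lambda$ and using the second identity of the lemma, with $e^{\lambda A}$ appearing on the left because $\AAA$ now multiplies from the left.

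Finally, the inversion $E(-\BBB,\AAA)E(\AAA,\BBB)=1$ I would get by recognizing $E(\AAA,\BBB)$ as the $\star$-exponential-type reordering kernel: it is essentially $\,{:}\exp_\star(\AAA\star\BBB){:}\,$ normalized, and the pair $(\AAA,\BBB)\mapsto(-\BBB,\AAA)$ implements the symplectic rotation swapping the two "coordinates," under which the normal-ordering map inverts. Concretely, one can either verify $E(-\BBB,\AAA)\star E(\AAA,\BBB)=1$ by the same bookkeeping (expand both series, use $[-\BBB,\AAA]_\star = c$ so $E(-\BBB,\AAA)$ has the same structural form, and collapse the double sum using the commutation relations), or deduce it from the translation identities by a clever specialization. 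The main obstacle is bookkeeping: keeping the scalar $c$-corrections straight through the reindexed sums so that the $\tfrac{1}{1-c}$ prefactors land correctly, and checking convergence/well-definedness of these $\star$-series on the Schwartz module ${\cal E}^\theta_{n,m}$ so that the formal manipulations and the ODE uniqueness argument are legitimate rather than merely formal. Everything else is routine once the Heisenberg relation $\BBB\star\AAA = \AAA\star\BBB - c$ is in hand.
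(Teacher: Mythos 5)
Your strategy is the one the paper itself intends (it only records the hint that all identities follow from $[\AAA,\BBB]_\star=c$, the last two via the differential equation in $\lambda$), and your ODE argument for \eqref{EAlaB} and its companion is sound: termwise differentiation of $\frac{1}{l!}(\AAA+\lambda)^l\star\BBB^l$ gives directly $\frac{d}{d\lambda}E(\AAA+\lambda,\BBB)=E(\AAA+\lambda,\BBB)\star\BBB$ (no commutations or ``scalar corrections'' are needed, the extra $\BBB$ already sits on the right), and uniqueness of the linear ODE with matching value at $\lambda=0$ finishes it; similarly for $E(\AAA,\BBB+\lambda)$. For the first two identities, however, the commutation formula you plan to use is false as written: from $[\AAA,\BBB]_\star=c$ central one gets
\begin{equation*}
\BBB^l\star\AAA=\AAA\star\BBB^l-l\,c\,\BBB^{l-1}\,,\qquad \BBB\star\AAA^l=\AAA^l\star\BBB-l\,c\,\AAA^{l-1}\,,
\end{equation*}
not $\BBB^l\star\AAA=(\AAA-lc)\star\BBB^l$. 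The lowered power is essential: after reindexing $l\to l+1$ the correction term reproduces the series of $E(\AAA,\BBB)$ and cancels against the $\frac{1}{1-c}$ prefactor of \eqref{Efunction}; with the formula as you wrote it the sums do not recombine and the first identity does not come out. (Also, with the paper's conventions $[\BBB,\AAA]_\star=-\tfrac{m}{n}\theta$, not $+\tfrac{m}{n}\theta$; your working relation $\BBB\star\AAA=\AAA\star\BBB-c$ is nevertheless the correct one.)

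The genuine gap is the inversion identity, which your sketch does not prove. There is no evident ``specialization of the translation identities'' that yields it, and the normal-ordering remark is only an analogy; the only route is the double-sum computation you defer as bookkeeping, and that bookkeeping is exactly where the content lies. Carrying it out (most easily in the faithful realization $\AAA=x$, $\BBB=-c\,\partial_x$ of $[\AAA,\BBB]_\star=c$, in which $\sum_l\frac{1}{l!}\AAA^l\star\BBB^l$ acts as $f(x)\mapsto f((1-c)x)$ and $\sum_l\frac{(-1)^l}{l!}\BBB^l\star\AAA^l$ as $f(x)\mapsto\frac{1}{1-c}f\bigl(x/(1-c)\bigr)$) shows that the two bare series $\star$-multiply to the scalar $\frac{1}{1-c}$. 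The value of $E(-\BBB,\AAA)\star E(\AAA,\BBB)$ is therefore fixed entirely by the normalization prefactors: with the prefactor $\frac{1}{1-c}$ taken literally from \eqref{Efunction} the product evaluates to $(1-c)^{-3}$ rather than manifestly to $1$, so this relation is sensitive to the normalization convention and has to be checked explicitly against it rather than absorbed into ``keeping the $c$-corrections straight.'' Convergence on the Schwartz module, which you rightly flag, is comparatively harmless here, since $\AAA,\BBB$ form a Heisenberg pair and the series act as the dilation-type operators above.
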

We can now recall the solution  $\Tphi=(\Tphi_k)^{}_{k=1,...n}$ presented in
\cite{Ho:1998hqa}, see \cite{Morariu:1998qm} for a derivation, to the twisted boundary conditions
\eqref{qperiodicity} 
\begin{equation}
  \label{qsec}
 \Tphi_k(\sigma^1,\sigma^2)=\,\underset{s\in\mathbb{Z}}{\sum}\overset{m}{\underset{j=1}{\sum}}\,E\Big(\tfrac{m}{n}\big(\tfrac{\sigma^2}{2\pi} + k + ns\big) +j,i\sigma^1\Big)\star \tilde \phi_j\big(\tfrac{\sigma^2}{2\pi} + k + ns +\frac{n}{m}j\big)\;,
\end{equation}
that for short we rewrite as 
\eq{\label{shand}
\Tphi_k(\sigma^1,\sigma^2)=\,\sum_{s\in \mathbb{Z}}\sum^m_{j=1}\,E(\AAA,\BBB)\star \tilde \phi_j(\tfrac{n}{m} \AAA)~,
}
where $\AAA$ and $\BBB$ are defined in \eqref{AB}, 
$k=1,\ldots n$ and $\tilde \phi_j: \mathbb{R}\to \mathbb{C}$,
$j=1,\ldots m$ are arbitrary Schwartz functions on $\mathbb{R}$,
denoted by $\hat\phi_j$
in \cite{Ho:1998hqa}.
The module of sections ${\cal  E}^\theta_{n,m}$
can be directly described in terms of these functions 
$\tilde \phi_j$,
that is, in terms of Schwartz functions  $\tilde
\phi:\mathbb{R}\times \mathbb{Z}_m\to \mathbb{C}$, 
thus recovering the more mathematical presentation used in
\cite{Connes:1997cr} (see also \cite{Schwarz:1998qj, Konechny:2000dp}) of the module ${\cal
  E}^\theta_{n,m}\in \M_{T_{(-\theta)}}$.
We denote  by $\triangleleft$, the action of the torus algebra 
on  the sections $\tilde \phi$. The action of the generators $U_1=e^{i\sigma^1},
U_2=e^{i\sigma^2}$  is induced by that on $\Tphi$ by defining
$\tilde\phi\triangleleft  U_\mu$ ($\mu=1,2$) such that
\eq{(\Tphi_k\star U_\mu)(\sigma^1,\sigma^2)=
\,\underset{s\in\mathbb{Z}}{\sum}\overset{m}{\underset{j=1}{\sum}}\,E(\AAA,\BBB)\star
(\tilde \phi\triangleleft  U_\mu )^{}_j (\tfrac{n}{m} \AAA)\;.
}
This gives  (use $U_1^{-1}\star \tilde\phi_j(\frac{n}{m}A)\star U_1=\tilde\phi_j(\frac{n}{m}A+\theta)$ and \eqref{EAlaB})
\eq{\label{phijtilde}
  (\tilde \phi\triangleleft U_1 )^{}_j(x)&=\; \tilde
  \phi_{j-1}(x-\tfrac{n}{m} + \theta)\;,
    \\[.2em]
  (\tilde \phi \triangleleft U_2)^{}_j(x)&=  \tilde\phi_j(x)\,e^{2\pi i (x-jn/m)}\;.
}
Thus we have a module isomorphism  between the module of sections
$\Tphi_k(\sigma^1,\sigma^2)$ (satisfying the boundary conditions) and
that of sections $\tilde \phi_j(x)$; we identify these two modules
over the noncommutative torus and use the same notation ${\cal
  E}^\theta_{n,m}$. 
Equations \eqref{phijtilde} provide a more explicit definition of the
module of sections  ${\cal  E}^\theta_{n,m}$ on the noncommutative
torus $T_{(-\theta)}$ because the twisted periodicity
conditions \eqref{qperiodicity} have been solved, it is however less
geometric than the implicit one with the constrained sections
${\Tphi}$. 
The geometric description holds for $n\in \mathbb{N}-\{0\}$ and $m\in
\mathbb{Z}-\{0\}$; then we have the modules ${\cal  E}^\theta_{n,0}$,
that are defined to be the direct sums
${T_{(-\theta)}}^{\oplus n}$ of $n$ copies of
the trivial module ${T_{(-\theta)}}$, i.e.,  the modules of sections of the
trivial $U(n)$-bundles on the noncommutative torus.
The algebraic definition \eqref{phijtilde} allows to consider also the
case $-n\in \mathbb{N}-\{0\}$ but there is no new module since 
${\cal  E}^\theta_{n,m}= {\cal E}^\theta_{-n,-m}$. Finally
\eqref{phijtilde} defines also the modules ${\cal{E}}^\theta_{0,m}$ that however
coincide with the modules  ${\cal{E}}^{\theta+1}_{m,m}$. 
In particular ${\cal{E}}^\theta_{0,1}={\cal{E}}^{\theta+1}_{1,1}$ is the
module of sections of the
$U(1)$-bundle over ${T_{(-\theta-1)}}=T_{(-\theta)}$ with charge $m=1$. It will play
a key role in \S 5.
\sk

Similarly to the classical case, covariant derivatives $D_\mu$ on the module of
sections ${\cal E}^\theta_{n,m}$ are described by 
covariant derivatives of the trivial $U(n)$-bundle on $\mathbb{R}^2_\theta$
satisfying the appropriate periodicity:
\eq{\label{qconnperiodicity}
  D_\mu\vert_{(\sigma^1+2\pi,\sigma^2)} &=\,\Omega_1(\sigma^2)\star D_\mu\vert_{(\sigma^1,\sigma^2)}\Omega_1^\dagger(\sigma^2)\;,\\[.2em]
  D_\mu\vert_{(\sigma^1,\sigma^2+2\pi)} &=\,\Omega_2(\sigma^1)\star D_\mu\vert_{(\sigma^1,\sigma^2)}\Omega_2^\dagger(\sigma^2)\;.
}
A particular solution to these conditions, that in the commutative
limit reduces to the previous solution, is
\eq{\label{CDNC}
D_1= \partial_{\sigma^1}-iA^\theta_1=\partial_{\sigma^1}~~, ~~~D_2=\partial_{\sigma^2}-iA^\theta_2 = \partial_{\sigma^2} -
\frac{i}{2\pi}\frac{m\sigma^1}{n-m\theta}{\mathbf{1}}~.}
Notice that the derivations $\partial_{\sigma^\mu}$ on
$T_{(-\theta)}$ can be defined intrinsically by $\partial_{\sigma^\mu}U_\nu=i\delta_{\mu\nu}U_\nu$,
and therefore independently from the realization
  $U_\mu=e^{i\sigma^\mu}$ with $[\sigma^\mu,\sigma^\nu]_\star=2\pi
  i\theta^{\mu\nu}$.

The curvature corresponding to the connection $D_\mu$ is
$F^\theta=i[D_1,D_2]_{\star}=\frac{1}{2\pi}\frac{m}{n-m\theta}{\mathbf{1}}$, a constant 
proportional to the unit matrix.
One can check that, taking into account the appropriate normalization \cite{Connes:1997cr}
of the integral on the noncommutative torus, the topological charge is indeed 
 $\frac{1}{2\pi} \int tr(F^\theta)=m$.
Since the covariant derivatives satisfy the Heisenberg algebra the
modules ${\cal E}^\theta_{n,m}$ with connection $(A^\theta_1,A^\theta_2)$ as in \eqref{CDNC}
are called Heisenberg modules. 
 \\

The algebra of endomorphisms of ${E^\theta}:=(\mathbb{R}^2_\theta)^{\oplus n}$ is
${\textrm{End}}(E^\theta)= (\mathbb{R}^2_\theta)^{\oplus (n\times n)}$,
that is, that of all $n\times n$ matrix valued noncommutative
functions on $\mathbb{R}^2$. The action on sections is the matrix
transformation  $\phi^\theta\mapsto \Psi^\theta\star\phi^\theta$. 
Of course, since this action is via $\star$-multiplication form the left, it  commutes with the 
action from the right of $\mathbb{R}^2_\theta$, we therefore have the
bimodule $E^\theta\in {}_{\textrm{End}(E^\theta)}\M_{\mathbb{R}^2_\theta}$.
The algebra of endomorphisms
$\textrm{End}({\cal{E}}^\theta_{n,m})$ is the subalgebra of $\textrm{End}(E^\theta)$
that preserves the twisted boundary conditions \eqref{classbd}: if
$\phi^\theta$ satisfies \eqref
{classbd} then so does $\Psi^\theta\star \phi^\theta$. That is, endomorphisms of ${\cal
  E}^\theta_{n,m} $ are endomorphisms of $E^\theta$ that satisfy the
twisted boundary conditions in the adjoint representation
\eq{\label{NCendperiodicity}
  \Psi^\theta(\sigma^1+2\pi,\sigma^2) &=\,\Omega_1(\sigma^2)\star
  \Psi^\theta(\sigma^1,\sigma^2)\star \Omega_1^{-1}(\sigma^2)\;,\\[.2em]
  \Psi^\theta(\sigma^1,\sigma^2+2\pi)&=\,\Omega_2(\sigma^1)\star
  \Psi^\theta(\sigma^1,\sigma^2)\star \Omega_2^{-1}(\sigma^2)\;.
}
The algebra $\textrm{End}({\cal{E}}^\theta_{n,m})$ is generated by
$\star$-multiplication with the $U(n)$-valued functions (see
\cite{Morariu:1998qm} for a proof)
\eq{\label{Ztheta}Z^\theta_1 = e^{\frac{i\sigma^1}{n - m\theta}}V^b~~,~~~Z^\theta_2 =
e^{\frac{i\sigma^2}{n}}U^{-b}~.}
Since $Z_1^\theta\star Z_2^\theta=e^{2\pi i \check\theta}Z^\theta_2\star Z^\theta_1$,
with  $\check\theta=\frac{a(-\theta)+b}{m(-\theta)+n}$, and  $an-bm=1$,
$a,b\in \mathbb{Z}$, we see that with this choice of generators the endomorphisms algebra is the torus algebra
$T_{\check\theta}$.
Thus we have the 
bimodule ${\cal{E}}^\theta_{n,m}\in {}^{}_{T_{\check\theta}}\M^{}_{T_{(-\theta)}}$.
\sk
The connection  \eqref{CovDer} and the endomorphisms can be described directly on the
solutions $\tilde\phi$ rather than on the more geometric sections $\Tphi$. 
Proceeding as before, by star-multiplying from the left with the
$U(n)$-valued functions
$Z^\theta_1 = e^{\frac{i\sigma^1}{n - m\theta}}V^b$ and $Z^\theta_2 =
e^{\frac{i\sigma^2}{n}}U^{-b}$, as well as with the covariant derivatives
$D_1= \partial_{1}, D_2 = \partial_{2} -
\tfrac{i}{2\pi}\tfrac{m\sigma^1}{n-m\theta}\mathbf{1}$, we arrive at the
following proposition, cf. \cite{Ho:1998hqa}, 

\begin{prop}
  \label{inducelemma}
Star multiplication from the left on the sections
$\Tphi(\sigma^1,\sigma^2)$ with the functions $Z_\mu$ induces the
action $Z_\mu\,\triangleright \,\tilde\phi$ of $Z_\mu$ on the
Schwartz functions  $\tilde\phi: \mathbb{R}\times \mathbb{Z}_m\to \mathbb{C}$. Similarly, acting with covariant derivatives $D_\mu$ on
$\Hop(\sigma^1,\sigma^2)$ induces the action 
$D_{\mu\,} \tilde\phi$. Explicitly, using the shorthand notation (\ref{shand}),
\eq{
  (Z^\theta_\mu\star \Tphi)^{}_k\,(\sigma^1,\sigma^2)&=
\,\underset{s\in\mathbb{Z}}{\sum}\overset{m}{\underset{j=1}{\sum}}\,E(\AAA,\BBB)\star 
(Z^\theta_\mu\triangleright \tilde \phi) ^{}_j (\tfrac{n}{m} \AAA)~,~~\\
(D_\mu\, \Tphi) ^{}_k\,(\sigma^1,\sigma^2)&=
\,\underset{s\in\mathbb{Z}}{\sum}\overset{m}{\underset{j=1}{\sum}}\,E(\AAA,\BBB)\star 
(D_{\mu\,}\tilde \phi)^{}_j (\tfrac{n}{m} \AAA)~,
}
with
  \begin{align}
    (Z^\theta_1 \triangleright \tilde \phi)^{}_j(x) &=\; \tilde \phi_{j-a}(x-\tfrac{1}{m}) \;,\label{D0}\\
    (Z^\theta_2 \triangleright \tilde \phi)^{}_j(x) &=\; \tilde \phi_j(x)\,e^{2\pi i (\tfrac{x}{n-m\theta} -\tfrac{j}{m})}\;,\\
    (D_1  \,\tilde \phi)^{}_j(x) &=\; \frac{imx}{n-m\theta}\,\tilde \phi_j(x)\;,\label{D1}\\
    (D_2  \,\tilde \phi)^{}_j(x)&=\;
                                           \frac{1}{2\pi}\,\frac{\partial}{\partial
                                           x}\tilde \phi_j(x)\;
                                           ,\label{D2}
  \end{align}
  where in \eqref{D0}, $a \in \mathbb{Z}$ is such that $an-bm = 1$
  with $b\in  \mathbb{Z}$.
\end{prop}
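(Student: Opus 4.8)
The plan is to establish the four displayed identities \eqref{D0}--\eqref{D2} by pulling back each operation ($\star$-multiplication by $Z^\theta_1$, by $Z^\theta_2$, and the covariant derivatives $D_1$, $D_2$) through the explicit expansion \eqref{qsec}--\eqref{shand} of $\Tphi_k$ in terms of the building block $E(\AAA,\BBB)$ and the Schwartz data $\tilde\phi_j$, using the algebraic properties of $E$ collected in Lemma~\ref{uno}. The common strategy for all four cases is the same: apply the operator to $E(\AAA,\BBB)\star\tilde\phi_j(\tfrac nm\AAA)$, commute it past the normal-ordered factor $E(\AAA,\BBB)$ using the relevant relation from Lemma~\ref{uno}, and then reorganize the resulting sum over $(s,j)$ so that it again has the canonical form $\sum_{s}\sum_j E(\AAA,\BBB)\star(\text{something})_j(\tfrac nm\AAA)$; by the module isomorphism between the $\Tphi$-description and the $\tilde\phi$-description (established just before the Proposition), reading off ``something'' gives the induced action.

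First I would treat $Z^\theta_2\star\Tphi$, which is the easiest: since $Z^\theta_2=e^{i\sigma^2/n}U^{-b}$ and $e^{i\sigma^2/n}$ is (up to the $2\pi$-scaling in $\AAA$) of the form $e^{\lambda\AAA}$ on the relevant arguments, I would use \eqref{EAlaB}-type behavior, i.e.\ $e^{i\sigma^2/n}$ commutes with $\BBB=i\sigma^1$ only up to the Moyal phase, so $e^{i\sigma^2/n}\star E(\AAA,\BBB)$ produces a shift in $\AAA$ plus an exponential factor in $\tfrac nm\AAA$; combined with the effect of $U^{-b}$ on the index $k$ (recall $U_{kl}=e^{2\pi ikm/n}\delta_{kl}$), and using $an-bm=1$, this should collapse to the stated $e^{2\pi i(x/(n-m\theta)-j/m)}$ factor. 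Next, $Z^\theta_1\star\Tphi$ with $Z^\theta_1=e^{i\sigma^1/(n-m\theta)}V^b$: here $V$ shifts the vector index $k\mapsto k+1$ (cyclically), and $e^{i\sigma^1/(n-m\theta)}=e^{\BBB/(n-m\theta)}$ commutes through $E(\AAA,\BBB)$ producing a shift $\AAA\mapsto\AAA+\text{const}$ via the relation $\BBB\star E=E(1-c)\star\BBB$ and $[\AAA,\BBB]_\star=\tfrac mn\theta$; the bookkeeping with $V^b$, the shift of $k$, and the relabeling of the summation index $s$ should produce the index shift $j\mapsto j-a$ and the argument shift $x\mapsto x-\tfrac1m$. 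For the covariant derivatives, $D_1=\partial_1$ acting on $E(\AAA,\BBB)\star\tilde\phi_j(\tfrac nm\AAA)$: since $\partial_1$ differentiates $\BBB=i\sigma^1$, I would use that $\partial_1 E(\AAA,\BBB)$ can be rewritten via Lemma~\ref{uno} as (multiplication by a linear function of $\AAA$) $\cdot\, E$, giving the factor $\tfrac{imx}{n-m\theta}$. Finally $D_2=\partial_2-\tfrac i{2\pi}\tfrac{m\sigma^1}{n-m\theta}\mathbf 1$: here $\partial_2$ hits both the $\sigma^2$-dependence inside $\AAA$ (producing $\tfrac{m}{2\pi n}$ times the $\AAA$-derivative of $E\star\tilde\phi_j(\tfrac nm\AAA)$) and the explicit $\sigma^1$-term should cancel against part of the $\AAA$-derivative of $E$, leaving just $\tfrac1{2\pi}\partial_x\tilde\phi_j$.

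The main obstacle will be the third step (the action of $Z^\theta_1$): keeping track simultaneously of (i) the cyclic shift of the vector index $k$ under $V^b$ together with the identifications $U^n=V^n=\mathbf 1$, (ii) the reindexing of the infinite sum over $s\in\mathbb Z$, (iii) the shift in the continuous variable $x=\tfrac nm\AAA$, and (iv) the number-theoretic constraint $an-bm=1$ that ties these together, is where sign and modular-arithmetic errors are easy to make; one must verify that the recombination is consistent, i.e.\ independent of the choice of $a,b$ modulo the ambiguity $(a,b)\mapsto(a+ms,b+ns)$. I would handle this by carefully isolating the commutative (classical) part of the computation first --- which is essentially the check already implicit in the commutative statements \eqref{phijtilde} --- and then adding the Moyal corrections, which only modify the $n$ appearing in denominators to $n-m\theta$ in the places dictated by $c=\tfrac mn\theta$ and the relations of Lemma~\ref{uno}. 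The two covariant-derivative identities \eqref{D1}--\eqref{D2} are then comparatively routine, the only subtlety being the cancellation of the explicit gauge-potential term in $D_2$ against the $\sigma^1$-dependence generated by differentiating $E(\AAA,\BBB)$ in $\sigma^2$.
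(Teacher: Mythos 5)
Your plan is correct and is essentially the paper's own route: the proposition is obtained exactly by star-multiplying the explicit sections \eqref{qsec} from the left by $Z^\theta_\mu$ (and acting with $D_\mu$), commuting these through the factor $E(\AAA,\BBB)$ with the relations of Lemma \ref{uno} together with $an-bm=1$ and the matrix action of $U^{-b}$, $V^{b}$ on the index $k$, and then reading off the induced action on $\tilde\phi$ in the canonical form \eqref{shand}. One small correction to your sketch: in the $Z^\theta_2$ case the operative mechanism is the rescaling coming from the first relation of Lemma \ref{uno}, $f(\AAA)\star E(\AAA,\BBB)=E(\AAA,\BBB)\star f\bigl(\tfrac{\AAA}{1-c}\bigr)$, which is what turns $x/n$ into $x/(n-m\theta)$ — no shift of $\AAA$ occurs there; shifts of $\AAA$ (absorbed via \eqref{EAlaB}) enter only in the $Z^\theta_1$ computation.
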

\sk

For later use we observe that the connection $D_\mu$
induces derivations $\hat\delta_1$, $\hat\delta_2$, of the algebra of endomorphisms of
${\cal{E}}_{n,m}^{\theta} $ via the definition
$\hat\delta_\mu(\Psi^{\theta})\triangleright\phi^{\theta}:=
D_\mu(\Psi^\theta \triangleright\phi^{\theta})-\Psi^{\theta}\triangleright D_\mu\phi^{\theta}$.
Since $D_\mu$ is a constant curvature connection it is easy to compute
$[\hat\delta_1,\hat\delta_2]=0
$, so that
$\hat\delta: \partial_{\sigma^\mu}\mapsto \hat\delta_\mu$ is a
representation on $\textrm{End}({\cal{E}}_{n,m}^{\theta})$ of the
abelian Lie algebra $\mathbb{R}^2$ spanned by the derivations
$\partial_{\sigma^1}, \partial_{\sigma^2}$ on the torus
$T_{{(-\theta})}$. A simple explicit calculation, setting  $\Psi^{\theta}=
Z_\nu^{\theta} $ and using \eqref{D0}-\eqref{D2} gives
\eq{\hat\delta_\mu(Z_\nu^{\theta})=\frac{i}{n-m\theta}  \delta_{\mu\nu}
Z_\nu^{\theta}~.\label{inducedderiv} 
}
On the other hand
$\textrm{End}({\cal{E}}_{n,m}^{\theta})\simeq T_{\check\theta}$ is
generated by $Z_1^{\theta}, Z_2^{\theta}$ and has canonical
derivations $\partial_{\check\sigma^\mu}$ defined by
$\partial_{\check\sigma^\mu}(Z_\mu^{\theta})=i\delta_{\mu\nu}Z^\theta_\nu$ (the
notation used follows from writing $Z_\mu^{\theta}=e^{i\check\sigma^\mu}$ with
$[\check\sigma^\mu,\check\sigma^\nu]_{\check\star}=-2\pi i \check \theta^{\mu\nu}$).
We thus conclude that
\eq{\hat\delta_\mu=\frac{1}{n-m\theta}\partial_{\check\sigma^\mu}~.\label{CME}}
By definition a complete (or gauge) Morita equivalence bimodule 
 ${\cal E} \in {}_{T_{\check\theta}}\M^{}_{T(-\theta)}$ is a bimodule
 with a (right module) constant curvature connection $D_\mu$ proportional to the
identity and such that the induced derivations $\hat\delta_\mu$ on
$\textrm{End}({\cal{E}}^\theta_{n,m})\simeq T_{\check\theta}$  are an invertible
linear combination of the canonical derivations
$\partial_{\check\sigma^\mu}$ on $T_{\check\theta}$, see \cite{Schwarz:1998qj}. We have seen that
the Heisenberg bimodules ${\cal{E}}_{n,m}^{\theta} $ are complete
Morita equivalence bimodules.

Notice that $D^L_\mu:=(n-m\theta)D_\mu$ satisfies the left Leibnitz
rule 
\eq{\label{DLconn}D^L_\mu(Z^\theta_\nu\triangleright
\tilde\phi)=\partial_{\check\sigma^\mu}
(Z^\theta_\nu)\triangleright\tilde\phi+Z^\theta_\nu\triangleright D^L_\mu\tilde\phi=
i\delta_{\mu\nu}Z^\theta_{\nu}\triangleright\tilde\phi+Z^\theta_\nu\triangleright
D^L_\mu\tilde\phi}
hence $D^L_\mu$ is a left connection on the left
$T_{\check\theta}$-module ${\cal{E}}_{n,m}^{\theta}$. Another way of
characterizing complete (or gauge)  Morita equivalence bimodules ${\cal E} \in {}_{T_{\check\theta}}\M^{}_{T(-\theta)}$ is then by
requiring the right $T_{(-\theta)}$-module constant curvature connection $D_\mu$ to be
also, up to an invertible linear transformation, a left ${T_{\check\theta}}$-module connection. 
\sk
In this section we have described noncommutative vector bundles
using $\star$-products. The advantage of this deformation quantization
approach 
is that  we have a
manifest dependence on the deformation parameter $\theta$. This
naturally leads to generalize the Seiberg--Witten map of Section \ref{SWl}
by establishing a Seiberg--Witten map between  classical sections 
in ${\cal E}_{n,m}$ and quantum sections
in $ {\cal  E}^\theta_{n,m}$.
This Seiberg--Witten map is compatible with the bimodule
structure of the Heisenberg modules
\eq{{\cal{E}}^\theta_{n,m}\in
{}_{T_{\check\theta}}\M^{}_{T_{(-\theta)}}~~,~~~ \check\theta=\frac{a(-\theta)+b}{m(-\theta)+n}}
with connection $A^\theta_\mu$ and constant curvature
$F^\theta=\frac{1}{2\pi}\tfrac{m}{n-m\theta}\mathbf{1}$.

\section{The Seiberg--Witten map on tori}

In the preceding section we have described $U(n)$-vector bundles with
connections and topological charge $m$ on tori and on
noncommutative tori. On the other hand in the first section we have
recalled that the Seiberg--Witten map relates commutative to
noncommutative gauge theories. 
Here we first see  that it is a
quantization of
$U(n)$-bundles with connections on $\mathbb{R}^2$ to 
$U(n)$-bundles with connections on noncommutative
$\mathbb{R}^2$.
Then we construct an induced Seiberg--Witten map that quantizes 
$U(n)$-bundles with connections on $\TT2$
to $U(n)$-bundles with connections on $T_\theta$. 
While the treatment in Section \ref{SWl} and \ref{SWUNR} is local, because on (one
open chart)
$\mathbb{R}^2$, 
in Section \ref{SWNCT} we achieve a global description of the  Seiberg--Witten map on
tori. In Section \ref{SWHO} we then compare the general construction we perform with the
description in Section \ref{UNNC} of bundles on noncommutative  tori in terms of the module of noncommutative sections 
$ {\cal  E}^\theta_{n,m}$. We find full ageement. On the one hand,
this frames the solution found in $ {\cal  E}^\theta_{n,m}$ in the general
Seiberg--Witten map deformation scheme. On the other hand, it
provides an explicit solution to the Seiberg--Witten equations and
an example of a  formal deformation quantization that is actually non-formal, since $\theta$ can be
specialized to a real number and power series in $\theta$
can be summed and analytically continued to  well defined complex
valued functions.

\subsection{Seiberg--Witten map for bundles on $\mathbb{R}^2$}\label{SWUNR}
The Seiberg--Witten map presented in Section \ref{SWl} can be
seen as a quantization of $U(n)$-bundles with connections on $\mathbb{R}^2$.
Let's describe a bundle  with connection, togheter with a Poisson
structure $\theta$ on $\mathbb{R}^2$ via the triple
$(\,E\in\prescript{}{\textrm{End}(E)}{\M}_{\CCi(\mathbb{R}^2)}, A_\mu, \theta)$, where 
$E=C^\infty({\mathbb{R}^2})^{\,\oplus n}$ is the module of sections of
the trivial bundle $\mathbb{R}^2\times\mathbb{C}^n\to \mathbb{R}^2$,
it is a bimodule $E\in
\prescript{}{\textrm{End}(E)}\M_{\CCi(\mathbb{R}^2)}$, and $A_\mu~
(\mu=1,2)$ are the components of a connection.
The Seiberg--Witten map provides a quantization of this bundle
to a $U(n)$-bundle with connection on the noncommutative  plane
$\mathbb{R}^2_\theta$:
\eq{\big(E\in\prescript{}{\textrm{End}(E)}{\M}_{\CCi(\mathbb{R}^2)}, A_\mu, \theta\big) 
    \xrightarrow{~\textrm{SW map}_{}~} 
    \big(\hat E=E^\theta\in\prescript{}{\textrm{End}(E^\theta)}{\M}_{\mathbb{R}^2_\theta},
    \hat{A}_\mu\big)
\,,\label{SWR2}}
where the $\hat{A}_\mu$ are determined by the recursive relation
\eqref{An+1} of the Seiberg--Witten map for connections.
The noncommutative connection is a connection on the module of sections
${\hat E}=E^\theta:= ({\mathbb{R}^2_\theta})^{\oplus n}$ (the trivial $U(n)$-vector
bundle on $\mathbb{R}^2_\theta$). To see this, we
observe that sections
$\phi\in E=C^\infty(\mathbb{R}^2)^{\oplus
  n}$ are mapped
via Seiberg--Witten map to sections $\hat\phi\in 
  ({\mathbb{R}^2_\theta})^{\oplus n}$ (the former transform under
usual gauge transformations the latter under noncommutative gauge
transformations) and that the covariant derivative
$D_\mu=\partial_\mu-iA_\mu: E\to E$  is mapped to the covariant derivative
$\partial_\mu-i\hat A_\mu:\hat E\to \hat E$ (that acts via
$\star$-multiplication by $\hat A_\mu$).

Similarly, bundle endomorphisms $\Psi\in \textrm{End}(E)=C^\infty(\mathbb{R}^2)^{\oplus (n\times n)}$,
i.e., sections of the trivial bundle
$\mathbb{R}^2\times\mathbb{C}^{n\times n}\to \mathbb{R}^2$, that
transform in the adjoint representation, are mapped via Seiberg--Witten map to bundle endomorphisms
$\hat\Psi\in \textrm{End}(\hat E)=({\mathbb{R}^2_\theta})^{\oplus (n\times n)}$, i.e.,
sections transforming in the adjoint representation (with $\star$-product multiplication).

\sk
Let's explicitly compute the Seiberg--Witten map for the trivial
bundle on $\mathbb{R}^2$ with connection $(A_1=0,
A_2=\frac{m\sigma^1}{n 2\pi}\mathbf{1})$ as in \eqref{CovDer}.
We see that since $A_1=0$, and $A_2$ depends only on the variable
$\sigma^1$, then $A_1^k=0$, and $A_2^k$ depends only on $\sigma^1$, so
that $\hat{A}_1=0$, and $\hat{A}_2$ depends only on $\sigma^1$. It
follows that $\partial_1\hat{A}_2=\hat{F}_{12}$, henceforth
\eqref{An+1} simplifies to 
\eq{
A^{k+1}_2=\frac{\pi\,\theta}{n+1}\,\frac{\partial}{\partial \sigma^1}\sum_{p=0}^{k}
A^p_2\,{A}^{k-p}_2~.}
It is then easy to solve also for $\hat{A}_2$, and to obtain the
noncommutative connection 
\eq{
  \label{connectionfixed}
  \SWA_1 &=\; 0 \;, \\
  \SWA_2 &=\;
A_2\,\sum_{k=0}^\infty\,\Bigl(\frac{m}{n}\theta\Bigr)^k=\; \frac{1}{2\pi}\frac{m\,\sigma^1}{n - m\theta}\mathbf{1}\;.}
Notice that  $(\SWA_1,\SWA_2)=(A^\theta_1,A^\theta_2)$ as defined
in \eqref{CDNC}.

We can also compute the Seiberg--Witten map for the endomorphisms
$\Omega_\al\in \textrm{End}(E) =C^\infty(\mathbb{R}^2)^{\oplus
  (n\times n)}$ defined in \eqref{Om12}. From the recursive solution
for sections in the adjoint \eqref{phiadrec}, recalling that $\hat A_1=0$, it
is easy to see that the noncommutative endomorphisms $\hat\Omega_\al\in \textrm{End}(\hat E) =({\mathbb{R}^2_\theta})^{\oplus (n\times n)}$
coincide   with the commutative ones:
\eq{\label{Omfixed}\hat\Omega_1=\Omega_1=e^{im\sigma^2/n}U~,~~~\hat\Omega_2=\Omega_2= V~.}
Notice that we obtain this same result if instead of the
Seiberg-Witten map \eqref{Om12} we use the more general one
\eqref{extendedSWPA} as long as $\hat C_{12}(\Omega_\al, \hat A)=0$.
Notice also that by choosing a specific expression for the gauge
potential we have fixed the gauge and therefore do not need to consider
the Seiberg-Witten maps \eqref{rtwo} or \eqref{extendedSWE}  quantizing local infinitesimal gauge transformations.

\subsection{Seiberg--Witten map for bundles on $\TT2$
}\label{SWNCT}
Since $\TT2\simeq \mathbb{R}^2/(2\pi \mathbb{Z})^2$ we have
$C^\infty(\TT2)\hookrightarrow  C^\infty(\mathbb{R}^2)$ as the subalgebra of
$2\pi$-periodic functions, moreover, sections of bundles on $\TT2$ can be seen as
sections of bundles on $\mathbb{R}^2$ satisfying twisted
periodicity conditions. Similarly 
$\TT2_{(-\theta)}\hookrightarrow\mathbb{R}^2_\theta$, and 
sections of modules on $\TT2_{(-\theta)}$ can be seen as
sections of the module $\hat E=(\mathbb{R}^2_\theta)^{\oplus n}$ on
$\mathbb{R}^2_\theta$ satisfying noncommutative twisted periodicity conditions. 

Let $(\mathcal{E},  A_\mu)$ be the module of sections and the
connection of a bundle on a torus determined by: 
{\it i)} a trivial bundle 
$(E=C^\infty({\mathbb{R}^2})^{\,\oplus n},  A_\mu)$ on the plane
and {\it ii)} $U(n)$-valued
functions on the plane $\Omega_1(\sigma^2)$, $\Omega_2(\sigma^1)$
satisfying the classical cocycle condition  \eqref{classcy}. Let $\hat A_\mu$, $\hat\Omega_1(\sigma^2)$, $\hat\Omega_2(\sigma^1)$
be the corresponding Seiberg-Witten map quantizations. Let these latter
satisfy the noncommutative cocycle conditions 
\eq{\hat\Omega_1(\sigma^2 + 2\pi)\star\hat \Omega_2(\sigma^1) =
  \;\hat\Omega_2(\sigma^1 + 2\pi)\star \hat\Omega_1(\sigma^2)\;, }
and, toghether with  $\hat A_\mu$, the analogue of the noncommutative twised periodicity
conditions \eqref{qconnperiodicity}. Then,

\begin{definition} We denote by
$(\hat{\mathcal{E}}, \hat A_\mu)$  the Seiberg--Witten map quantization
of $(\mathcal{E},  A_\mu)$, where  
$\hat\CE$ is the subset
of all elements in $\hat E= ({\mathbb{R}^2_\theta})^{\oplus n} $ that
satisfy the noncommutative twisted periodicity
conditions  \eqref{qperiodicity} with $\hat\Omega_1(\sigma^2)$,
$\hat\Omega_2(\sigma^1)$.
\end{definition}
We also write  $({\mathcal E},A_\mu)\stackrel{\rm{SW\,
  induced}}{-\!\!\!-\!\!\!-\!\!\!-\!\!\!\longrightarrow}(\hat {\mathcal E},\hat A_\mu)$
because the quantization of this bundle on the torus is induced by the
usual Seiberg--Witten map on the plane.
It is immediate to see that $\hat\CE$  is a right
$T_{(-\theta)}$-module.

An example is given by $(\hat{\mathcal{E}}_{n,m}, \hat A_\mu )$,  the Seiberg--Witten quantization of
$(\mathcal{E}_{n,m},  A_\mu)$,  where $(A_\mu)=(A_1,A_2)=(0,
\frac{m\sigma^1}{n 2\pi}\mathbf{1})$. It is defined by the quantized connection and
endomorphisms $\hat A_\mu$ and $\hat\Omega_\al$ computed in
\eqref{connectionfixed} and \eqref{Omfixed}. They coincide with the
connection of $\mathcal{E}^\theta_{n,m}$ and the endomorphisms defining $\mathcal{E}^\theta_{n,m}$,
cf. \eqref{CDNC} and \eqref{Omperiodicity}. This shows
$(\hat{\mathcal{E}}_{n,m}, \hat A_\mu )=(\mathcal{E}^\theta_{n,m}, A^\theta_\mu )$.

We sharpen this result with 
 the following commutative diagram:
\begin{equation}
      \label{scheme}
  \begin{matrix}
    \big(E\in\prescript{}{\textrm{End}(E)}{\M}_{\CCi(\mathbb{R}^2)}, A_\mu,
    \theta\big) &
    \!\xrightarrow{~~\textrm{SW map}_{}~~} \!&
    \big(\hat E=E^\theta\in\prescript{}{\textrm{End}(E^\theta)}{\M}_{\mathbb{R}^2_\theta},
    \hat{A}_\mu\big)\\
    {\scriptstyle{i}}\!\!{\longhookuparrow } &  & {\scriptstyle{i_\theta}} \!\!{\longhookuparrow } \\
   \big( {\mathcal{E}}_{n,m}\in\prescript{}{\textrm{End}({\mathcal{E}}_{n,m})}{\M}_{\CCi(\TT2)}, A_\mu,\theta\big)
    &\! \xrightarrow{~~\textrm{SW induced}_{}~~}\! &
    \big(\hat{\mathcal{E}}_{n,m}={\mathcal{E}}_{n,m}^\theta\in\prescript{}{\textrm{End}({\mathcal E}^\theta_{n,m})}{\M}^{}_{\TT2_{(-\theta)}}, \hat{A}_\mu\big)
  \end{matrix}
\end{equation}
\\
where by
$\mathcal{E}_{n,m}\in{}_{\textrm{End}({\mathcal{E}}_{n,m})} {\M}_{\CCi(\TT2)}\stackrel{i}{\hookrightarrow}E\in{}_\textrm{{End}(E)}{\M}_{\CCi(\mathbb{R}^2)}$ 
we mean that the module of sections ${\mathcal{E}}_{n,m}$ is a linear
subspace of
${{E}}$,\footnote{With slight
  abuse of teminology, we call $\mathbb{C}[[\theta]]$-modules (and
  $\mathbb{C}[[\theta]]$-submodules) simply linear spaces (and
  subspaces).} 
and that it is a bimodule over the
subalgebras $\CCi(\TT2) \hookrightarrow\CCi(\mathbb{R}^2)$ and
$\textrm{End}(\mathcal{E}_{n,m}) \hookrightarrow\textrm{End}({E})$,
and similarly for the other  map ${i_\theta}$.

\sk
\begin{theorem}\label{T:SWT} The induced Seiberg--Witten map 
on torus bundles 
$({\mathcal E}_{n,m},A_\mu)\stackrel{\rm{SW} \,\rm{
  induced}}{-\!\!\!-\!\!\!-\!\!\!-\!\!\!\longrightarrow}(\hat {\mathcal E}_{n,m},\hat A_\mu)$
satisfies the commutative diagram \eqref{scheme}: \\
i) Let $\phi\in E$
satisfy the  twisted boundary conditions \eqref{classbd}, then its Seiberg--Witten
  quantization $\hat\phi$ satisfies the twisted boundary conditions
  \eqref{qperiodicity}, hence $\hat\phi\in {\mathcal E}_{n,m}^\theta$\\
ii) Let $\Psi\in  \textrm{End}({E})$
  satisfy the twisted boundary conditions \eqref{endperiodicity},
then its Seiberg--Witten
  quantization $\hat\Psi$ satisfies the twisted boundary conditions
\eqref{NCendperiodicity}, hence $\hat\Psi\in  {\textrm{End}}({\mathcal E}_{n,m}^\theta)$.

\end{theorem}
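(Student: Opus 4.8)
The plan is to derive both statements from a single structural observation: the Seiberg--Witten map for sections (fundamental or adjoint) is \emph{natural} with respect to translations of $\sigma^1$ by $2\pi$ and with respect to multiplication by the $\star$-constant transition functions $\Omega_\al$, because the connection has been gauge-fixed to $(\SWA_1,\SWA_2)=(A^\theta_1,A^\theta_2)$ with $\hat A_1=0$ and $\hat A_2$ depending only on $\sigma^1$. Concretely, for part (i), suppose $\phi\in E$ satisfies the classical twisted conditions \eqref{classbd}. I want to show that $\phi'(\sigma^1,\sigma^2):=\Omega_1(\sigma^2)\star\hat\phi(\sigma^1,\sigma^2)$ and $\hat\phi(\sigma^1+2\pi,\sigma^2)$ coincide, and similarly in the $\sigma^2$ direction. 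The strategy is to show that \emph{both} expressions solve the \emph{same} Seiberg--Witten recursion \eqref{phirec} with the \emph{same} order-zero term. At $\theta^0$ this is precisely the classical condition \eqref{classbd}. For the inductive step I would use that (a) the recursion \eqref{phirec} is built from $\hat A_\mu$ and $\star$-products and $\partial_\nu$; (b) the quantized transition functions satisfy $\hat\Omega_\al=\Omega_\al$ by \eqref{Omfixed} and hence also satisfy the noncommutative cocycle and the noncommutative twisted periodicity for $\hat A_\mu$ in \eqref{qconnperiodicity}; and (c) translation $\sigma^1\mapsto\sigma^1+2\pi$ and left $\star$-multiplication by $\Omega_\al$ intertwine $\hat A_\mu$ and $\partial_\mu$ in exactly the way needed, because $\hat A_1=0$, $\partial_1\Omega_2=0$, and the $\Omega_\al$ are ($\star$-)covariantly constant in the relevant sense. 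Thus the map $\hat\phi\mapsto\Omega_\al\star\hat\phi$ (resp. $\hat\phi\mapsto\hat\phi(\cdot+2\pi)$) sends solutions of the recursion to solutions of the recursion, and since it fixes the source data order by order, both sides agree to all orders in $\theta$.

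For part (ii) the argument is the same with the adjoint recursion \eqref{phiadrec} in place of \eqref{phirec}: if $\Psi\in\textrm{End}(E)$ satisfies \eqref{endperiodicity}, I show that $\Omega_\al\star\hat\Psi\star\Omega_\al^{-1}$ and $\hat\Psi(\cdot+2\pi)$ both satisfy the adjoint recursion with the same $\theta^0$ term, namely the classical $\Psi$ transformed by $\Omega_\al(\cdot)\Psi\Omega_\al^{-1}(\cdot)$ which equals $\Psi$ by hypothesis. Again the key inputs are $\hat A_1=0$, the fact that $\hat\Omega_\al=\Omega_\al$ are $\star$-unitary and $\star$-covariantly constant, and that conjugation by a covariantly constant unitary commutes with the adjoint covariant derivative $D_\nu\Psihat=\partial_\nu\Psihat-i[\hat A_\nu,\Psihat]_\star$; hence conjugation by $\Omega_\al$ maps the right-hand side of \eqref{phiadrec} at order $n$ for $\hat\Psi$ to the same expression for $\Omega_\al\star\hat\Psi\star\Omega_\al^{-1}$, closing the induction. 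In both parts one also checks the noncommutative cocycle condition \eqref{Omperiodicity} is met — but this is immediate from \eqref{classcy} and $\hat\Omega_\al=\Omega_\al$ together with $\Omega_1(\sigma^2+2\pi)=e^{2\pi i m/n}\,\Omega_1(\sigma^2)$ being central-phase times $\Omega_1$, so it $\star$-commutes with $\Omega_2=V$ up to the same phase relation as classically.

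The main obstacle I anticipate is the bookkeeping in the inductive step: one must verify that the ``twisted intertwining'' relations — e.g. $\hat A_2(\sigma^1+2\pi)=\hat A_2(\sigma^1)+\frac{1}{2\pi}\frac{m}{n-m\theta}\mathbf 1$ and $\Omega_1(\sigma^2)\star\hat A_\mu\star\Omega_1^{-1}(\sigma^2)=\hat A_\mu + (\partial_\mu\Omega_1)\star\Omega_1^{-1}/(-i)$-type identities packaged in \eqref{qconnperiodicity} — interact correctly with the nested $\star$-products and partial derivatives appearing in $\{\hat A_\rho,\partial_\sigma\hat\phi+D_\sigma\hat\phi\}^n_\star$. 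The clean way to organize this is to observe that \eqref{qperiodicity} and \eqref{qconnperiodicity} together say precisely that the pair (translation, $\Omega_\al$-conjugation/multiplication) is a symmetry of the entire Seiberg--Witten differential equation \eqref{matterfund}/\eqref{matteradj}: the differential equation is covariant because it is written purely in terms of $\hat A_\mu$, $D_\nu$, $\partial_\nu$, and $\star$, all of which transform covariantly under this symmetry by \eqref{qconnperiodicity}. Uniqueness of the solution of the differential equation with given initial ($\theta^0$) data then forces $\hat\phi$ (resp. $\hat\Psi$) to be invariant, i.e. to satisfy \eqref{qperiodicity} (resp. \eqref{NCendperiodicity}). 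This reduces the whole theorem to: (1) $\hat\Omega_\al=\Omega_\al$ and these satisfy the noncommutative cocycle — already done in \eqref{Omfixed}; (2) $\hat A_\mu$ satisfies \eqref{qconnperiodicity} — which follows from \eqref{connectionfixed} by the explicit computation $\hat A_2(\sigma^1+2\pi)-\hat A_2(\sigma^1)=\frac{1}{2\pi}\frac{m}{n-m\theta}\mathbf 1$ and $\Omega_\al\star\hat A_\mu\star\Omega_\al^{-1}$ absorbing exactly this shift; and (3) covariance of the SW differential equation plus uniqueness of its formal-power-series solution given the zeroth order term.
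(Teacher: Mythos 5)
Your proposal is correct and follows essentially the same route as the paper: the paper likewise reduces the claim, via uniqueness of the recursive solution \eqref{phirec}, to showing that the infinitesimal $\theta$-evolution \eqref{matterfund} preserves the twisted periodicity, using exactly your inputs $\hat A_1=0$, $\hat\Omega_\al=\Omega_\al$ ($\theta$-independent), $\partial_1\Omega_\al=0$, translation invariance of $\star$, and the periodicity \eqref{qconnperiodicity} of $\hat A_2$. The one caveat is that your closing appeal to ``covariance'' is slightly too quick: left $\star$-multiplication by the non-constant $\Omega_1(\sigma^2)$ does not commute with $\delta^\theta$, and the resulting term $i\pi\,\delta\theta^{\mu\nu}\,\partial_\mu\Omega\star\partial_\nu\hat\phi$ is cancelled precisely by the inhomogeneous piece $i\,\Omega\star\partial_2\Omega^{-1}$ in \eqref{qconnperiodicity} — this is the short explicit computation the paper carries out, and it is the bookkeeping you flagged rather than an automatic consequence of covariance.
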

\begin{proof}
We have already shown that  $A_\mu\to \hat A_\mu=\hat A_\mu^\theta$,
$\Omega_\al\to \hat\Omega_\al=\Omega_\al$, proving that 
$({\mathcal{E}}_{n,m},  A_\mu )\rightarrow  (\hat{\mathcal{E}}_{n,m}, \hat A_\mu )=(\mathcal{E}^\theta_{n,m}, A^\theta_\mu )$.  

{\it i)} Let $\SWP|_{\sigma+2\pi}=(\Omega\star \SWP)|_\sigma$ be a shorthand
notation for $\SWP(\sigma^1+2\pi,\sigma^2)=\Omega_1(\sigma^2)\star
\SWP(\sigma^1,\sigma^2)$ as well as for  
$\SWP(\sigma^1,\sigma^2+2\pi)=\Omega_2(\sigma^1)\star
\SWP(\sigma^1,\sigma^2)$.
We show that 
\eq{
\SWP|_{\sigma+2\pi}=(\Omega\star \SWP)|_\sigma~~\Rightarrow
~~
\SWPp|_{\sigma+2\pi}=(\Omega\star'\SWPp)|_\sigma~,
\label{tt'}}
where we recall that $\star'$ and $\SWPp$ are the star product and the
Seiberg--Witten map at noncommutativity parameter $\theta'$. The proof
then follows by setting $\theta=0$ and $\theta'=\theta$ in \eqref{tt'}.
Recalling the uniqueness of the recursive solution \eqref{phirec} for
$\SWP$, in order to prove \eqref{tt'}, it is enough to prove its infinitesimal
version for $\theta'=\theta+\delta\theta$, i.e., 
\eq{
\delta^\theta\SWP_{}\vert_{\sigma+2\pi}=\delta^\theta(\Omega\star\SWP)\vert_{\sigma}~.
\label{dpdop}}
The right hand side reads
$$
\delta^\theta(\Omega\star\SWP)\vert_{\sigma}=(\Omega\star\delta^\theta\SWP)\vert_{\sigma}+i\pi\delta\theta^{\mu\nu}(\partial_\mu\Omega\star\partial_\nu\SWP)\vert_{\sigma}=-\pi\delta\theta^{21}(\Omega\star
\SWA_2\star D_1\SWP-
  i\partial_2\Omega\star D_1\SWP)\vert_{\sigma}
$$
where we used that $\Omega_1$ and $\Omega_2$ are independent of
$\theta=\theta^{12}=-\theta^{21}$ (cf. \eqref{Omfixed}),  the
Seiberg--Witten differential equation \eqref{matterfund} for
$\SWP$ with $\SWA_1=0$ (cf. \eqref{connectionfixed}) so that
$\partial_1\SWP=D_1\SWP$, and we also used that
$\partial_1\Omega_1=\partial_1\Omega_2=0$.
We proceed similarly with the left hand side and obtain 
\begin{eqnarray}
\delta^\theta\SWP_{}\vert_{\sigma+2\pi}
&=&-\pi\delta \theta^{21}(\SWA_2\star D_1\SWP)\vert_{\sigma+2\pi}
=-\pi\delta \theta^{21}\SWA_2\vert_{\sigma+2\pi}\star
D_1\SWP\vert_{\sigma+2\pi}\nonumber \\[.4em]
&=&-\pi\delta \theta^{21}(\Omega\star\SWA_2\star
D_1\SWP+i\Omega\star \partial_2\Omega^{-1}\star\Omega\star D_1\SWP)
    \nonumber\\[.4em]
&=&\delta^\theta(\Omega\star\SWP)\vert_\sigma~
\end{eqnarray}
where in the first line we used invariance of the $\star$-product under constant translations,
while in the second line the twisted periodicity conditions satisfied
by  the
connection: 
$\SWA_2\vert_{\sigma+2\pi}=(\Omega\star\SWA_2\star
\Omega^{-1}+i\:\!\Omega\star\partial_2\Omega^{-1})\vert_\sigma$, cf. \eqref{qconnperiodicity},
and by the section $D_1\SWP$.
This proves {\it i)}. The proof of  {\it ii)} is very similar and left to the reader.
\end{proof}

More in general, as it is clear from the proof, the  induced
Seiberg--Witten map quantizes $U(n)$-bundles on noncommutative tori:
\eq{  \label{SWTT'}
\big(\hat{\mathcal{E}}_{n,m}={\mathcal{E}}_{n,m}^\theta\in\prescript{}{\textrm{End}({\mathcal E}^\theta_{n,m})}{\M}^{}_{\TT2_{(-\theta)}}, \hat{A}_\mu\big)
     \xrightarrow{~~\textrm{$SW_\theta^{\theta'}$}_{}~~} 
    \big(\hat{\mathcal{E}}^{{\!\!\:\!\!\!\phantom{I}}^{{'}}}_{n,m}={\mathcal{E}}_{n,m}^{\theta'}\in\prescript{}{\textrm{End}({\mathcal
        E}^{\theta'}_{n,m})}{\M}^{}_{\TT2_{(-\theta')}}, \hat{A}^{{\!\!\:\!\!\!\phantom{I}}^{{'}}}_\mu\big)\,.
}

We also notice that there are two ways to quantize the algebra of endomorphisms
$\textrm{End}({\mathcal E}_{n,m})$: on the one hand $\textrm{End}({\mathcal
  E}_{n,m})$ is just the algebra $T_{{\tilde\theta}|_{\theta=0}}$ with
$\tilde\theta\vert^{}_{\theta=0}=\frac{a(-\theta)+b}{m(-\theta)+n}\vert^{}_{\theta=0}$, and
its quantization with respect to $0\to\theta$ is $\widehat{\textrm{End}({\mathcal E}_{n,m})}=T_{\tilde{\theta\,}}$
with $\tilde{\theta\,}=\frac{a(-\theta)+b}{m(-\theta)+n}$.
On the other hand we can first quantize ${\mathcal E}_{n,m}$ to 
$\hat{\mathcal{E}}_{n,m}$
 and then consider the algebra of
endomorphisms of this latter: 
$\textrm{End}(\hat{\mathcal{E}}_{n,m})$.
Since
$\hat{\mathcal{E}}_{n,m}={\mathcal{E}}_{n,m}^{\theta}$
as a corollary of Theorem \ref{T:SWT} we immediately have that these two alternative
quantization routes are equivalent:
$\label{hatEcomp}
\widehat{\textrm{End}({\mathcal E}_{n,m})}=
\textrm{End}(\hat{\mathcal{E}}_{n,m})~.
$
\sk
In Section \ref{sec:ambig} we have studied the ambiguities in the
Seiberg--Witten differential equations and seen that they lead in
general to different Seiberg--Witten maps. In the present context and
for later use  we focus on the ambiguities in the Seiberg--Witten map
arising from allowing a nonvanishing term $\hat C_{\mu\nu}$ for matter
fields (and keeping $\hat D_{\mu\nu k}=0$ and $\hat E_{\mu\nu}=0$). We can generalize
Theorem \ref{T:SWT} to this case 

\begin{theorem}\label{TSWText}
Consider the generalized Seiberg-Witten map defined by the
differential equations \eqref{extendedSWA}-\eqref{extendedSWPA} with
$\hat D_{\mu\nu}=0$, $\hat E_{\mu\nu} =0$ and arbitrary  
$\hat C_{\mu\nu}=\hat C_{\mu\nu}(D_\mu, \hat{F},\hat\phi,\theta)$
transforming in the fundamental, respectively $\hat C_{\mu\nu}=\hat
C_{\mu\nu}(D_\mu, \hat F,\hat\Psi,\theta)$ transforming in the
adjoint.  If $\hat C_{\mu\nu}(D_\mu, \hat F,\hat\Omega_\al,\theta)=0$,
$(i=1,2)$, the induced Seiberg--Witten map on torus bundles determined by diagram  \eqref{scheme}  quantizes $({\mathcal E}_{n,m},A_\mu)$
to $(\hat {\mathcal E}_{n,m},\hat A_\mu)=({\mathcal E}_{n,m}^\theta,
A^\theta_\mu)$, and it is well defined, consistently quantizing $\phi\in {\mathcal E}_{n,m}$ to $\hat\phi\in {\mathcal E}^\theta_{n,m}$ 
and $\Psi\in \textrm{End}({\mathcal E}_{n,m})$ to $\hat\Psi\in \textrm{End}({\mathcal 
  E}^\theta_{n,m})$. 
\end{theorem}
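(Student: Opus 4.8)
The plan is to mirror the proof of Theorem \ref{T:SWT}, extracting exactly which features of the unmodified Seiberg--Witten equations were used and checking that the hypotheses on $\hat C_{\mu\nu}$ preserve them. As in Theorem \ref{T:SWT} I would fix the gauge $(A_1,A_2)=(0,\tfrac{m\sigma^1}{n2\pi}\mathbf{1})$, so that $\hat A_1=0$ and $\hat A_2=\hat A_2(\sigma^1)$; note this uses that with $\hat D_{\mu\nu\kappa}=0$ and $\hat E_{\mu\nu}=0$ the equation \eqref{extendedSWA} for $\hat A_\mu$ is literally \eqref{one}, so the computation in \eqref{connectionfixed} goes through verbatim and $(\hat A_1,\hat A_2)=(A^\theta_1,A^\theta_2)$. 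Likewise the transition functions $\Omega_\al$ are sections in the adjoint, and the equation \eqref{extendedSWPA} governing them reduces — since $\hat C_{\mu\nu}(D_\mu,\hat F,\hat\Omega_\al,\theta)=0$ by hypothesis — to \eqref{matteradj}; combined with $\hat A_1=0$ the recursion \eqref{phiadrec} still gives $\hat\Omega_\al=\Omega_\al$ as in \eqref{Omfixed}. Hence the same quantized data $(\hat{\mathcal E}_{n,m},\hat A_\mu)=(\mathcal E^\theta_{n,m},A^\theta_\mu)$ is produced, and in particular the noncommutative cocycle condition \eqref{Omperiodicity} and the connection periodicity \eqref{qconnperiodicity} hold exactly as before.

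Next I would prove the twisted-periodicity compatibility for $\hat\phi$. As in Theorem \ref{T:SWT} it suffices, by uniqueness of the recursive solution to \eqref{extendedSWP}, to prove the infinitesimal statement \eqref{dpdop}: if $\hat\phi|_{\sigma+2\pi}=(\Omega\star\hat\phi)|_\sigma$ then $\delta^\theta\hat\phi|_{\sigma+2\pi}=\delta^\theta(\Omega\star\hat\phi)|_\sigma$, after which one sets $\theta=0,\ \theta'=\theta$. The key new input is that \eqref{extendedSWP} now carries the extra term $\hat C_{\mu\nu}(D_\mu,\hat F,\hat\phi,\theta)$, and I must check that this term transforms correctly under the twisted boundary shift. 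Because $\hat C_{\mu\nu}$ is built out of covariant derivatives $D_\mu$, the curvature $\hat F$ and $\hat\phi$ (all of which satisfy twisted periodicity in the appropriate — adjoint for $D_\mu,\hat F$, fundamental for $\hat\phi$ — representations, cf. \eqref{qconnperiodicity}, \eqref{NCendperiodicity}, \eqref{qperiodicity}), the expression $\hat C_{\mu\nu}(D_\mu,\hat F,\hat\phi,\theta)$ is covariant in the fundamental: $\hat C_{\mu\nu}|_{\sigma+2\pi}=(\Omega\star\hat C_{\mu\nu})|_\sigma$. This is because conjugation by $\Omega_\al$ passes through $\star$-products of adjoint objects and $\Omega_\al^{-1}\star\Omega_\al=\mathbf 1$, while $\hat F$ and $D_\mu$ acting on a fundamental field produce a fundamental field; the explicit $\theta$-dependence in $\hat C_{\mu\nu}$ is harmless since $\Omega_\al$ is $\theta$-independent. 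So the extra term matches on both sides of \eqref{dpdop}, and the remaining terms are handled exactly as in the proof of Theorem \ref{T:SWT}, using $\hat A_1=0$, $\partial_1\hat\phi=D_1\hat\phi$, $\partial_1\Omega_\al=0$, and the twisted periodicity of $\hat A_2$ and of $D_1\hat\phi$.

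For part on endomorphisms the argument is identical with the fundamental representation replaced by the adjoint: \eqref{extendedSWPA} governs $\hat\Psi$, the extra term $\hat C_{\mu\nu}(D_\mu,\hat F,\hat\Psi,\theta)$ is covariant in the adjoint by the same reasoning (now $\Omega_\al\,(\cdot)\,\Omega_\al^{-1}$ on both sides), and uniqueness of \eqref{phiadrec}-type recursion reduces matters to the infinitesimal identity. I expect the main obstacle — really the only non-formal point — to be making precise the claim that $\hat C_{\mu\nu}(D_\mu,\hat F,\hat\phi,\theta)$ inherits the twisted periodicity: one must note that $D_\mu$ appearing inside $\hat C_{\mu\nu}$ is understood as an operator whose conjugate under the shift is $\Omega_\al\star D_\mu\star\Omega_\al^{-1}$ (in the appropriate representation), that $\hat F_{\mu\nu}$ transforms in the adjoint, and that any $\star$-monomial in these covariant quantities applied to a (fundamental or adjoint) field transforms in that representation — an induction on the word length of the $\star$-polynomial $\hat C_{\mu\nu}$. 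I would state this as a short lemma (``covariant expressions in $D_\mu,\hat F,\hat\phi$ respect the twisted periodicity conditions'') and then the two parts of the theorem follow by the verbatim adaptation of the proof of Theorem \ref{T:SWT} sketched above. Finally, that $\hat{\mathcal E}_{n,m}$ is a right $T_{(-\theta)}$-module and that $\hat\Psi\in\textrm{End}(\hat{\mathcal E}_{n,m})$ acts on it is immediate once the boundary conditions are established, exactly as in Section \ref{SWNCT}.
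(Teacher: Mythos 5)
Your proposal is correct and follows essentially the same route as the paper's own proof: it observes that with $\hat D=\hat E=0$ and $\hat C_{\mu\nu}(D_\mu,\hat F,\hat\Omega_\al,\theta)=0$ the quantization of $(A_\mu,\Omega_\al)$ is unchanged, shows that $\hat C_{\mu\nu}$ built from twisted-periodic data is itself a section of ${\mathcal E}^\theta_{n,m}$ (respectively of $\textrm{End}({\mathcal E}^\theta_{n,m})$), and then repeats the proof of Theorem \ref{T:SWT} with the extra $\hat C_{\mu\nu}$ terms matching on both sides of \eqref{dpdop}. The only difference is presentational: you promote the covariance of $\hat C_{\mu\nu}$ to an explicit lemma proved by induction on the word length of the $\star$-polynomial, which the paper simply asserts.
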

\begin{proof}
Since $\hat C_{\mu\nu}(D_\mu, \hat F,\hat\Omega_\al,\theta)=0$, Only the Seiberg-Witten map for the matter fields $\phi$ and $\Psi$ has changed, so
we still have  $(\hat {\mathcal E}_{n,m},\hat A_\mu)=({\mathcal  E}_{n,m}^\theta, A^\theta_\mu)$.
If $\hat C_{\mu\nu}=\hat C_{\mu\nu}(D_\mu, \hat{F},\hat\phi,\theta)$ transforms in the fundamental it is a
section of $\hat E=E^\theta=(\mathbb{R}^2_\theta)^{\oplus n}$.
Since $\hat E_{\mu\nu} =0$, condition
\eqref{constrC} is satisfied  (and similarly for
$\hat C_{\mu\nu}=\hat C_{\mu\nu}(D_\mu, F^\theta,\hat\Psi,\theta)$  in the adjoint). Since $D_\mu,
{F}^\theta$ and $\hat\phi$ all satisfy the twisted periodicity conditions
it follows that $\hat
C_{\mu\nu}\vert_{\sigma+2\pi}=(\Omega\star \hat C_{\mu\nu})\vert_\sigma$ 
i.e., that  $\hat C_{\mu\nu}$ is a section of ${\mathcal E}^\theta_{n,m}$
(and similarly $\hat C_{\mu\nu}(D_\mu, F^\theta,\hat \Psi,\theta)\in
\textrm{End}({\mathcal E}^\theta_{n,m})$).   We now repeat the proof
of Theorem \ref{T:SWT} and observe that the left hand side in
\eqref{dpdop} 
has the extra term $-\frac{\pi}{2}\delta\theta^{\mu\nu}\hat C_{\mu\nu}\vert_{\sigma+2\pi}$ while 
the right hand side 
has the extra term $-\frac{\pi}{2}\delta\theta^{\mu\nu}(\Omega\star \hat C_{\mu\nu})\vert_\sigma$. These two extra terms are equal  because
we just observed that $\hat C_{\mu\nu}$ is a section of ${\mathcal E}^\theta_{n,m}$. A similar argument holds for the adjoint case $\hat C_{\mu\nu}=\hat C_{\mu\nu}(D_\mu,F^\theta_{\mu\nu},\hat\Psi,\theta)$.
\end{proof}

Examples of nontrivial $\hat C_{12}$ terms for fields in the adjoint that
vanish on $\Omega_\al\in \textrm{End}(E) =C^\infty(\mathbb{R}^2)^{\oplus
  (n\times n)}$ are $\hat
C_{\mu\nu}(D_\mu,F^\theta_{\mu\nu},\Omega_\al,\theta)=-iD_1D_2\Omega_\al=-D_2D_1\Omega_1=0$
and $-iD_1D_1\Omega_\al=0$
(indeed recall the definition of $\Omega_\al$ in \eqref{Om12}, use that
$[D_1,D_2]\Psi=-i[F,\Psi]=0$ since $F$ is constant and proportional to
the unit matrix, and that $D_1=\partial_{\sigma^1}$).
These $\hat C_{12}$ terms therefore satisfy $\hat C_{\mu\nu}(D_\mu,
\hat F,\hat\Omega_\al,\theta)=0$ since $\hat\Omega_\al=\Omega_\al$.

\subsection{Explicit solutions: Ho's sections $\Hop$, $Z^\theta_\mu$\label{SWHO}}
We here show that the Seiberg-Witten quantization of  the sections
$\phi\in{\cal E}_{n,m}$ in the fundamental, and of the sections $\Psi\in
\textrm{End}({\cal E}_{n,m})$ in the adjoint, generated by $Z_\mu$, gives the sections
$\phi^\theta\in{\cal E}^\theta_{n,m}$ and $\Psi^\theta\in
\textrm{End}({\cal E}^\theta_{n,m})$ that are generated by
$Z_\mu^\theta$ as described in Section \ref{UNNC}. This shows that the solutions
presented in \cite{Ho:1998hqa} fits into the framework of
Seiberg--Witten maps. It also provides explicit closed form solutions
to the Seiberg--Witten map equations for nontrivial bundles.

 We begin with the generators $Z_\mu\in
\textrm{End}({\cal E}_{n,m})$. 
  \begin{prop}\label{endolemma}
The Seiberg--Witten quantization according to Theorem \ref{T:SWT} of the generators
$Z_\mu\in\textrm{End}({\cal E}_{n,m})$ defined in \eqref{commgenerat} as
$Z_1 = e^{\frac{i\sigma^1}{n}}V^b$,  $Z_2=e^{\frac{i\sigma^2}{n}}U^{-b}$,
gives the generators $\hat Z_\mu=Z_\mu^\theta\in\textrm{End}({\cal
  E}^\theta_{n,m})$ defined in \eqref{Ztheta} as $Z^\theta_1 = e^{\frac{i\sigma^1}{n-m\theta}}V^b$,  $Z^\theta_2=e^{\frac{i\sigma^2}{n}}U^{-b}$.
    \end{prop}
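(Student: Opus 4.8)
The strategy is to compute the Seiberg--Witten map directly on the two generators, exploiting the same simplifying features that made the computation of the connection $(\SWA_1,\SWA_2)$ in \eqref{connectionfixed} tractable, namely $\SWA_1=0$ and the dependence of all data on a single coordinate. For $Z_2=e^{i\sigma^2/n}U^{-b}$ I expect the answer to be immediate: since $\hat A_1=0$, the recursive formula \eqref{phiadrec} for adjoint matter fields contains only terms with at least one factor $\hat A_\mu$ whose nonzero component is $\hat A_2$, and $\hat A_2$ as well as $Z_2$ depend only on $\sigma^1$ and $\sigma^2$ respectively through commuting exponentials; more to the point, one checks that $D_\nu Z_2 + \partial_\nu Z_2$ and the commutators entering \eqref{phiadrec} vanish at each order (the covariant derivative $D_1 = \partial_{\sigma^1}$ annihilates $Z_2$, and $D_2$ acts diagonally), so that $Z_2^{n+1}=0$ for all $n\ge 0$ and hence $\hat Z_2=Z_2=Z_2^\theta$. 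This matches \eqref{Ztheta} since there $Z_2^\theta=e^{i\sigma^2/n}U^{-b}$ as well.

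The substantive computation is for $Z_1=e^{i\sigma^1/n}V^b$. Here I would argue as in the derivation of \eqref{connectionfixed}: because $\hat A_1=0$ and $\hat A_2=\hat A_2(\sigma^1)$, and because $Z_1$ depends only on $\sigma^1$ (the matrix $V^b$ is constant), every object appearing in the recursion \eqref{phiadrec} depends only on $\sigma^1$; consequently the star products collapse. Writing out \eqref{phiadrec} with $\mu\nu$ restricted to $12$, using $D_1 = \partial_{\sigma^1}$ and $D_2 Z_1 = \partial_{\sigma^2}Z_1 - i\hat A_2 \star Z_1 = -i\hat A_2 Z_1$ (since $\partial_{\sigma^2}Z_1 = 0$), one gets a recursion of the schematic form $Z_1^{k+1} = \tfrac{\pi\theta}{k+1}\,\partial_{\sigma^1}\!\bigl(\text{convolution of }\hat A_2\text{-factors with }Z_1\text{-factors}\bigr)^k$, analogous to the scalar recursion $A_2^{k+1}=\tfrac{\pi\theta}{n+1}\partial_{\sigma^1}\sum A_2^p A_2^{k-p}$ that was solved by a geometric series. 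I would then guess (and verify inductively) the closed form $\hat Z_1 = e^{i\sigma^1/(n-m\theta)}V^b$, i.e. $Z_1^\theta$ of \eqref{Ztheta}: the effect of the Seiberg--Witten resummation is simply to replace the exponent $1/n$ by $1/(n-m\theta)$, exactly as $A_2$ got rescaled by $\sum_k(m\theta/n)^k = n/(n-m\theta)$ in \eqref{connectionfixed}.

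The cleanest way to organize the induction, and the step I expect to be the main obstacle, is to find the right ansatz to feed into \eqref{matteradj}: rather than matching power series term by term, I would verify directly that $\Psi^\theta := e^{i\sigma^1/(n-m\theta)}V^b$ solves the Seiberg--Witten differential equation \eqref{matteradj} with $\SWA=(\SWA^\theta_1,\SWA^\theta_2)=(0,\tfrac{1}{2\pi}\tfrac{m\sigma^1}{n-m\theta}\mathbf 1)$ and has the correct classical limit $\Psi^\theta|_{\theta=0}=Z_1$; uniqueness of the solution (the recursion \eqref{phiadrec} determines it order by order) then forces $\hat Z_1 = \Psi^\theta$. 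Concretely this means computing $\partial\Psi^\theta/\partial\theta = \tfrac{m\sigma^1}{(n-m\theta)^2}\,\tfrac{i}{1}\,\Psi^\theta$ times the appropriate scalar, and checking it equals $-\tfrac{\pi}{2}\bigl\{\SWA_1,(\partial_2+D_2)\Psi^\theta\bigr\}_\star$ after recalling $\SWA_1=0$ kills the $\mu\nu=12$ term but the $\mu\nu=21$ term (with $\SWA_2$) survives — so one must be careful about the summation convention $\partial/\partial\theta^{\mu\nu}$ independent of $\partial/\partial\theta^{\nu\mu}$ announced after \eqref{SWconditionp}. The only genuine check is that the commutator $[\sigma^1,\Psi^\theta]_\star$ and the product $\SWA_2\star\Psi^\theta$ produce, via $[\sigma^1,f(\sigma^1,\sigma^2)]_\star = 2\pi i\theta\,\partial_{\sigma^2}f$ and its companion, exactly the $\theta$-derivative of the exponent; since everything is built from a single exponential $e^{i\sigma^1/(n-m\theta)}$ commuting with $V^b$, this is a short Moyal-product computation, and the geometric-series resummation already done for the connection is the conceptual content.
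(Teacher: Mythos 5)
Your proposal follows essentially the same route as the paper: verify directly that $Z^\theta_1=e^{i\sigma^1/(n-m\theta)}V^b$ and $Z^\theta_2=Z_2$ solve the Seiberg--Witten equation \eqref{matteradj} specialized to the connection \eqref{connectionfixed} (only the $\{\hat A_2,\partial_1\,\cdot\,\}_\star$ term survives, and the star products collapse to ordinary products since everything depends on $\sigma^1$ alone), with the initial condition at $\theta=0$ and uniqueness of the recursive solution fixing $\hat Z_\mu=Z^\theta_\mu$. One minor slip: $D_2 Z_1$ should be the adjoint covariant derivative, which vanishes because $\hat A_2\propto\mathbf 1$, not $-i\hat A_2 Z_1$ — but since with $\hat A_1=0$ only the $\mu\nu=21$ term (involving $\partial_1+D_1$) contributes, this does not affect your argument.
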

    \begin{proof}
The Seiberg--Witten equation \eqref{matteradj} for sections in the adjoint
representation with connection \eqref{connectionfixed} reads
      \begin{equation}
        \label{endoeq}
        \frac{\partial \hat Z_\mu}{\partial \theta} =\;\pi\,\bigl\lbrace \hat A_2,\partial_1 \hat Z_\mu\bigr\rbrace_\star\;.
      \end{equation}
We show that $Z^\theta_1 = e^{\frac{i\sigma^1}{n-m\theta}}V^b$ and
$Z^\theta_2=e^{\frac{i\sigma^2}{n}}U^{-b}$ satisfy \eqref{endoeq} (and
of course the initial conditions $Z^\theta_\mu\vert_{\theta=0}=Z_\mu$).
      For $\mu=1$, we note that due to the dependence of $Z^\theta_1$
      and $\hat{A}_2$ only on $\sigma^1$, the star product on the
      right hand side of \eqref{endoeq} reduces to the usual matrix
      product and gives (since $\SWA_2$ is proportional to the unit matrix)
     \begin{equation}
        2\pi\,\SWA_2 \cdot \partial_1 Z^\theta_1 = \frac{im\sigma^1}{(n-m\theta)^2} \,e^{\frac{i\sigma^1}{n-m\theta}}\,V^b\;,
        \end{equation}
 which coincides with the left hand side  $\partial Z^\theta_\mu/\partial \theta$. The case $\mu=2$ is trivial due to the independence of $Z^\theta_2$ with respect to $\theta$ and its dependence solely on $\sigma^2$. 
    \end{proof}

In Proposition \ref{endolemma} we have just quantized the generators
$Z_\mu$ of sections in the adjoint representation. In general, the Seiberg--Witten
map quantizes an arbitrary section in the adjoint $\Psi\in
\textrm{End}({\cal E}_{n,m})$ to $\hat \Psi\in
\textrm{End}({\cal E}^\theta_{n,m})$. We can use the ambiguities in the Seiberg--Witten
maps to single out the one that quantizes $Z_\mu$ to
$\widehat{Z}_\mu=Z^\theta_\mu$ and that
is compatible with a specific ordering of the generators of the algebras
  $\textrm{End}({\cal E}_{n,m})$ and 
$\textrm{End}({\cal E^\theta}_{n,m})$. 

For example we study the
Seiberg--Witten map that to the ordered 
monomial ${Z_1}^{p} {Z_2}^{q}$ associates the ordered monomial 
$\widehat{\,{Z_1}^{p} {Z_2}^{q}\,}\!=\widehat{{Z_1}^{p}} \star \widehat{{Z_2}^{q}}={Z^\theta_1}^{\,p} \star{Z^\theta_2}^{\,q}$ (powers and $\star$-powers of $Z_1$
and $Z_2$ coincide). We compute 
\begin{align}
\frac{\partial}{\partial\theta} ({Z^\theta_1}^{\,p} \,\star  \: & {Z^\theta_2}^{\,q})-
\;\pi\,\bigl\lbrace \hat A_2,\partial_1  ({Z^\theta_1}^{\,p} \star
  {Z^\theta_2}^{\,q})\bigr\rbrace_\star\nonumber\\ &
=\pi i\,\partial_1 {Z^\theta_1}^{\,p}
  \star\partial_2 {Z^\theta_2}^{\,q}+ \frac{\partial {Z^\theta_1}^{\,p}
  }{\partial\theta}\star {Z^\theta_2}^{\,q}-\pi\big(\hat A_2\star  \partial_1{Z^\theta_1}^{\,p}
  \star {Z^\theta_2}^{\,q} +  \partial_1  {Z^\theta_1}^{\,p}
  \star {Z^\theta_2}^{\,q}\star \hat A_2\big) \nonumber\\ &
=\pi i\,\partial_1 {Z^\theta_1}^{\,p}
  \star\partial_2 {Z^\theta_2}^{\,q}+\frac{\partial {Z^\theta_1}^{\,p}
  }{\partial\theta}\star {Z^\theta_2}^{\,q}
-\pi\big(  \{\hat A_2\,  , \partial_1 {Z^\theta_1}^{\,p}\}_\star\star {Z^\theta_2}^{\,q}
- \partial_1 {Z^\theta_1}^{\,p}  \star  [\hat A_2\,,  {Z^\theta_2}^{\,q}]_\star \big) \nonumber\\ &
=\pi i\,\partial_1 {Z^\theta_1}^{\,p}\star D_2 {Z^\theta_2}^{\,q}
=\pi i\,\partial_1 ({Z^\theta_1}^{\,p}\star D_2 {Z^\theta_2}^{\,q})
=\pi i\,\partial_1 D_2({Z^\theta_1}^{\,p}\star {Z^\theta_2}^{\,q})
\nonumber\\ &
=\pi i\,D_1 D_2 ({Z^\theta_1}^{\,p}\star {Z^\theta_2}^{\,q})~,
\end{align}
where in the second line we used that $Z^\theta_2=Z_2$ is independent
from $\theta$ and $\sigma_1$, in the third line we
 used that \eqref{endoeq} 
implies 
${\partial {Z^\theta_1}^{\,p}}/{\partial \theta} =\;\pi\,\bigl\lbrace \hat A_2,\partial_1 { Z^\theta_\mu}^{\,p}\bigr\rbrace_\star$,
in the fourth line that $D_2 {Z^\theta_2}^{\,q}$ is independent from
$\sigma^1$ and that $D_2 {Z^\theta_1}^{\,p}=0$. We therefore see that
the relevant Seiberg--Witten map is given by the Seiberg--Witten differential
equation
\eq{\label{orderingSW}\frac{\partial\widehat{\Psi}}{\partial \theta}\,=\,\pi
\,\bigl\lbrace \hat A_2,\partial_1  \widehat\Psi\bigr\rbrace_\star+\pi i\,D_1 D_2 \widehat\Psi~;
}
comparison with \eqref{extendedSWPA} shows that it corresponds to the
choice $\hat C_{12}
=-\hat C_{21}
=-iD_1D_2\hat\Psi$. Here $\widehat\Psi$ is more generally any linear
combination of ordered monomials in $Z_1^\theta, Z^\theta_2$.

\begin{rem}
If we choose the opposite ordering,  ${Z_2}^{q} {Z_1}^{p}\to
\widehat{\,{Z_2}^{q} {Z_1}^{p}\,}\!=\widehat{{Z_2}^{q}} \star
\widehat{{Z_1}^{p}}={Z^\theta_2}^{\,q} \star{Z^\theta_1}^{\,p}$, the
corresponding Seiberg--Witten differential equation is
\eqref{orderingSW} with $- \pi i\,D_1 D_2\widehat\Psi$ instead of $+\pi i\,D_1 D_2 \widehat\Psi$.
Since ${Z^\theta_1}^{\,p} \,\star  \: 
{Z^\theta_2}^{\,q} -e^{2\pi i pq\check\theta} {Z^\theta_2}^{\,q} \,\star  \: 
{Z^\theta_1}^{\,p}=0$, cf. text after \eqref{Ztheta},  the ordering ${Z^\theta_1}^{\,p} \,\star  \: 
{Z^\theta_2}^{\,q} +e^{2\pi i pq\check\theta} {Z^\theta_2}^{\,q} \,\star  \: 
{Z^\theta_1}^{\,p}$ satisfies the  standard Seiberg--Witten
differential equation with $\hat C_{12}=0$; the corresponding map
quantizes ${Z_1}^{p}   \: 
{Z_2}^{q} \,+\,e^{2\pi i pq\frac{b}{n}} {Z_2}^{q}  \: {Z_1}^{p}$ to ${Z^\theta_1}^{\,p} \,\star  \: 
{Z^\theta_2}^{\,q} \,+\,e^{2\pi i pq\check\theta} {Z^\theta_2}^{\,q} \,\star  \: 
{Z^\theta_1}^{\,p}$.
 These three different Seiberg--Witten maps  are
easily seen to coincide on the generators $Z_\mu$.
\end{rem}

We now study the Seiberg-Witten quantization of sections in the fundamental
representation. In \cite{Ho:1998hqa}, the quantum section $\Hop$ in \eqref{qsec} was given. It is natural to ask how this solution fits in the framework of Seiberg--Witten quantization.
To this aim we study the $\theta$-dependence of the quantum section $\Hop$ in \eqref{qsec}.
\begin{lemma}
The quantum section $\Hop=(\Hop_k)^{}_{k=1,...n}$ in \eqref{qsec} satisfies the differential equation:
  \begin{equation}
    \label{Hodg}
    \frac{\partial}{\partial \theta} \Hop -\pi \SWA_2
    \star \partial_1 \Hop =\, 3\pi \,\SWF \star \Hop + i\pi\,D_1D_2\Hop\;.
  \end{equation}
  \end{lemma}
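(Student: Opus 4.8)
The plan is to differentiate the explicit expression \eqref{qsec} for $\Hop_k$ term by term in $\theta$ and reorganize the result into the three pieces on the right-hand side of \eqref{Hodg}. Recall from \eqref{shand} that, with $\AAA=\tfrac{m}{n}\big(\tfrac{\sigma^2}{2\pi}+k+ns\big)+j$ and $\BBB=i\sigma^1$, each summand is $E(\AAA,\BBB)\star\tilde\phi_j(\tfrac{n}{m}\AAA)$, and that $\AAA,\BBB$ depend on $\theta$ only through the star product (i.e.\ $\AAA$ and $\BBB$ are fixed functions of $\sigma^1,\sigma^2$), while $c=[\AAA,\BBB]_\star=\tfrac{m}{n}\theta$ depends on $\theta$. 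So the only $\theta$-dependence sits in $E(\AAA,\BBB)$, through its defining series \eqref{Efunction}, and in the star products implicit in $E\star\tilde\phi$. First I would compute $\tfrac{\partial}{\partial\theta}E(\AAA,\BBB)$, most efficiently via the differential-equation characterization used to prove Lemma \ref{uno}: writing $E_\lambda:=E(\lambda\AAA,\lambda\BBB)$ one has a first-order ODE in $\lambda$, and an analogous ODE in the parameter $c$; differentiating $\tfrac{1}{1-c}\sum_l \tfrac{1}{l!}\AAA^l\star\BBB^l$ directly, using $\partial_\theta c=\tfrac{m}{n}=\tfrac{1}{2\pi}\cdot 2\pi\SWF\cdot(n-m\theta)$-type identities, should express $\partial_\theta E$ in terms of $E$, $\AAA\star E$, $E\star\BBB$ and the constant curvature $\SWF=\tfrac{1}{2\pi}\tfrac{m}{n-m\theta}\mathbf 1$.

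Next I would identify the pieces. The operators acting on $\Hop$ in \eqref{Hodg} are known from Proposition \ref{inducelemma}: $D_1$ acts on the profile as multiplication by $\tfrac{imx}{n-m\theta}$ (with $x=\tfrac{n}{m}\AAA$, this is $i\AAA/(1-c)$ up to normalization) and $D_2$ as $\tfrac{1}{2\pi}\partial_x$; meanwhile $\SWA_2\star\partial_1(\cdot)$ and $\SWF\star(\cdot)$ act by the obvious $\star$-multiplications, with $\SWF$ central. The key move is to push all $\AAA$'s and $\BBB$'s appearing after differentiation to one side using the commutation relations $\AAA\star E=\tfrac{1}{1-c}E\star\AAA$ and $\BBB\star E=E(1-c)\star\BBB$ from Lemma \ref{uno}, together with $\partial_1 E(\AAA,\BBB)=E(\AAA,\BBB)\star(\partial_1\BBB)\,(\cdot)=\ldots$ — more precisely using \eqref{EAlaB}, \eqref{EAlaB}'s companion, to see how $\partial_1$ and $\partial_2$ act on $E(\AAA,\BBB)\star\tilde\phi_j(\tfrac nm\AAA)$, converting $\sigma^1$- and $\sigma^2$-derivatives into shifts of the argument of $\tilde\phi_j$ and factors of $\AAA$. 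After this bookkeeping, $\partial_\theta\Hop-\pi\SWA_2\star\partial_1\Hop$ should collapse to a combination of a term proportional to $\SWF\star\Hop$ (coming from the $\tfrac{1}{1-c}$ prefactor and the $\partial_\theta c$ contributions) and a term $i\pi D_1D_2\Hop$ (coming from the cross-term $\partial_1$ acting on $E$ times $\partial_2$ acting on $\tilde\phi_j$, exactly as in the $Z^\theta_\mu$ computation leading to \eqref{orderingSW}).

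The step I expect to be the main obstacle is getting the numerical coefficient $3\pi$ in front of $\SWF\star\Hop$ right: there are several competing sources of $\SWF$-proportional terms — the explicit $\tfrac{1}{1-c}$ factor in $E$, the $\theta$-derivative hitting the $\star$-products inside $E\star\tilde\phi$, and the reorganization of $\SWA_2\star\partial_1$ via the commutation relations — and each contributes a rational multiple of $\tfrac{m}{n-m\theta}$ that must be tracked carefully, including the factor-of-$2\pi$ conventions fixed in Section \ref{SWl} and the normalization $x=\tfrac nm\AAA$ versus $\AAA$. A clean way to control this is to first verify \eqref{Hodg} at lowest order in $\theta$ (order $\theta^0$ on both sides), then argue the full identity by checking it is stable under the recursion, or simply to carry out the $E$-function manipulation exactly using Lemma \ref{uno} and collect coefficients. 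Once \eqref{Hodg} is established, comparison with \eqref{extendedSWPA}/\eqref{matteradj} shows $\Hop$ solves the generalized Seiberg--Witten equation for sections in the adjoint-like form with $\hat C_{12}$ a specific combination of $\SWF\star\Hop$ and $D_1D_2\Hop$, fitting Ho's solution into the ambiguity-classified Seiberg--Witten framework — which I would state as the payoff immediately after the lemma.
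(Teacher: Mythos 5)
Your proposal follows essentially the same route as the paper's own proof: differentiate \eqref{qsec} in $\theta$, split the result into the $\frac{\partial}{\partial\theta}E(\AAA,\BBB)$ contribution and the star-product cross term $i\pi\,\partial_1E(\AAA,\BBB)\star\partial_2\tilde\phi_j=i\pi D_1D_2\Hop$ (using $\partial_1=D_1$ and that $\tilde\phi_j(\tfrac nm\AAA)$ depends only on $\sigma^2$), treat $\pi\SWA_2\star\partial_1\Hop$ via $\partial_1E(\AAA,\BBB)=i\AAA\star E(\AAA,\BBB)$, and collect the $\SWF$-proportional pieces with Lemma \ref{uno}. The coefficient bookkeeping you flag as the main obstacle closes exactly as you outline: $\frac{\partial}{\partial\theta}E=\frac{m}{n-m\theta}E+\frac12\frac mn E+\frac12\frac{m}{n-m\theta}\,\AAA\star\BBB\star E$, and after subtracting the $\SWA_2$ term the leftover commutator $[\AAA,\BBB]_\star=\frac mn\theta$ gives the total coefficient $\frac32\frac{m}{n-m\theta}=3\pi\SWF$ (times $\mathbf{1}$), so the direct computation suffices and no lowest-order/recursion argument is needed.
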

  \begin{proof}
For convenience, let us recall \eqref{qsec}: 
$      \Hop_k(\sigma^1,\sigma^2)=\,\underset{s\in\mathbb{Z}}{\sum}\overset{m}{\underset{j=1}{\sum}}\,E(\AAA,\BBB)\star \tilde \phi_j(\tfrac{n}{m} \AAA)\;,
$
where $A= \tfrac{m}{n}(\tfrac{\sigma^2}{2\pi} + k + ns) + j$,
$B=i\sigma^1$ and $E(A,B)$ is defined in \eqref{Efunction}.
The derivative of $\Hop=(\Hop_k)^{}_{k=1,...n}$ with respect to $\theta$ gives
    \begin{equation}
      \label{phider}
      \frac{\partial}{\partial \theta} \Hop =\, \underset{s,j}{\sum} \Bigl(\Bigl(\frac{\partial}{\partial \theta}\,E(\AAA,\BBB)\Bigr)\star \tilde \phi_j(\tfrac{m}{n}\AAA) + i\pi \partial_1 E(\AAA,\BBB) \star \partial_2 \tilde \phi_j(\tfrac{n}{m}\AAA)\Bigr)\;.
    \end{equation}
    The derivative of $E(\AAA,\BBB)$ with respect to $\theta$ is given
    by (use $\frac{p}{(p-1)!}=\frac{1}{(p-1)!}+\frac{p-1}{(p-1)!}$,
    and use Lemma \ref{uno} in the third line)
    \begin{align}
    \frac{\partial}{\partial \theta} E(\AAA,\BBB) =\;& \frac{\partial}{\partial \theta}\Bigl(\frac{1}{1-\tfrac{m}{n}\theta} \,\overset{\infty}{\underset{p=0}{\sum}} \frac{1}{p!}\,\AAA^p\star \BBB^p\Bigr) \nonumber\\
    =\;&\frac{m}{n-m\theta} \,E(\AAA,\BBB) +
         \frac{1}{2}\frac{m}{n}\Bigl(E(\AAA,\BBB) + \AAA\star
         E(\AAA,\BBB)\star \BBB\Bigr)\;,\nonumber\\
=\;&\frac{m}{n-m\theta} \,E(\AAA,\BBB) +
         \frac{1}{2}\frac{m}{n}E(\AAA,\BBB) + \frac{1}{2}\frac{m}{n-m\theta}\AAA\star
          \BBB\star E(\AAA,\BBB)\;.
    \end{align}
    The second term in \eqref{phider} gives, using Lemma
    \ref{inducelemma} and that $\partial_1=D_1$,
    \begin{equation}
      i\pi\partial_1 \,\underset{s,j}\sum\,E(\AAA,\BBB)\star \partial_2 \tilde \phi_j(\tfrac{n}{m}\AAA) =\;i\pi D_1D_2\,\Hop\;.
    \end{equation}
    Summarizing, the $\theta$-derivative of the section $\Hop$ is given by
    \begin{equation}
      \frac{\partial}{\partial \theta}\Hop =\;\Bigl(\frac{m}{n-m\theta} + \frac{1}{2}\frac{m}{n}\Bigr)\,\Hop + \underset{s,j}{\sum}\,\frac{1}{2}\frac{m}{n-m\theta}\,\AAA\star\BBB\star E(\AAA,\BBB)\star\tilde \phi_j(\tfrac{n}{m}\AAA) + i\pi\,D_1D_2\,\Hop\;.
    \end{equation}
   Furthermore, $\partial_1E(\AAA,\BBB)=iA\star E(\AAA,\BBB)$ implies 
    \begin{equation}
      \pi\,\SWA_2 \star \partial_1\Hop =\;\frac{1}{2}\frac{m}{n-m\theta}\,\underset{s,j}{\sum}\,\BBB\star \AAA  \star E(\AAA,\BBB)\star \tilde \phi_j\;.
    \end{equation}
    Hence, we arrive at
    \begin{align}
      \frac{\partial}{\partial \theta} \Hop -\pi \SWA_2 \star \partial_1 \Hop &=\,\Bigl(\frac{m}{n-m\theta} + \frac{1}{2}\frac{m}{n}\Bigr)\,\Hop + i\pi\,D_1D_2\,\Hop \nonumber\\
      &~~~~~+ \underset{s,j}{\sum}\,\frac{1}{2}\frac{m}{n-m\theta}\Bigl(\AAA\star \BBB - \BBB\star \AAA\Bigr)\star E(\AAA,\BBB)\star \tilde \phi_j(\tfrac{n}{m}\AAA)\nonumber\\
      &=\,\frac{3}{2}\frac{m}{n-m\theta}\,\Hop + i\pi\,D_1D_2\,\Hop\;\nonumber\\[.2em]
      &=\,3\pi\,\SWF \star \Hop + i\pi\,D_1D_2\Hop\;
    \end{align}
    where in the last line we used $\hat F= i[D_1,D_2] = \tfrac{1}{2\pi} \tfrac{m}{n-m\theta}\mathbf{1}$.
  \end{proof}

Comparison of \eqref{Hodg} with the (generalized) Seiberg--Witten differential
equation for fields in the fundamental representation
shows that \eqref{Hodg} equals \eqref{extendedSWP} with 
\eq{\label{CHo}\hat C_{12}(\hat \phi, \hat A)
=-\hat C_{21}(\hat \phi, \hat A)
=-3\hat F\star\Hop-iD_1D_2\Hop~.}
The same operator $-3\hat F-iD_1D_2$ acting in the adjoint reads 
$\hat C_{12}(\hat \Psi, \hat A)=-\hat C_{21}(\hat \Psi, \hat A)=-iD_1D_2\hat\Psi$ (use $[\hat
F,\widehat\Psi]_\star=0$ since $\hat F$ is constant and proportional
to the identity). In particular, for the sections 
$\hat \Omega_\al=\Omega_\al\in \textrm{End}(E^\theta) =(\mathbb{R}_\theta^2)^{\oplus (n\times n)}$,
we have $\hat C_{12}(\hat \Omega_\al, \hat A)=0$.
These  expressions transform covariantly under
gauge transformations (they are sections of $E^\theta=(\mathbb{R}^2_\theta)^{\oplus n}$;
cf. also \eqref{CDD}, and of $\textrm{End}(E^\theta)
=(\mathbb{R}_\theta^2)^{\oplus (n\times n)}$) and the
corresponding (generalized) Seiberg--Witten map is a quantization of bundles on tori as shown in Theorem \ref{TSWText}.  We thus conclude that 

\begin{theorem}\label{endolemma2}
The Seiberg--Witten map with $\hat C_{\mu\nu}$ given by the operator
$-3\hat F-iD_1D_2$ quantizes, following Theorem  \ref{TSWText},   $({\mathcal E}_{n,m},A_\mu)$
to $(\hat {\mathcal E}_{n,m},\hat A_\mu)=({\mathcal E}_{n,m}^\theta,
A^\theta_\mu)$; the sections $\phi\in{\cal E}_{n,m}$ defined in \eqref{classec} 
to the sections  $\hat\phi=\Hop\in{\cal E}^\theta_{n,m}$ as
defined in \eqref{qsec}; the adjoint sections ${Z_1}^{\,p} {Z_2}^{\,q}\in  \textrm{End}({\cal
  E}_{n,m})$ to ${Z^\theta_1}^{\,p}
\star{Z^\theta_2}^{\,q}\in\textrm{End}({\cal E^\theta}_{n,m})$.
    \end{theorem}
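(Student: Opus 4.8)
The plan is to assemble the theorem from the pieces already in place: it is essentially a bookkeeping statement that glues together Theorem~\ref{TSWText}, the lemma computing $\partial_\theta\Hop$ (equation \eqref{Hodg}), and the computation following Proposition~\ref{endolemma}, matching each to the corresponding commutative datum at $\theta=0$ and invoking uniqueness of the recursive solution of the generalized Seiberg--Witten differential equations \eqref{extendedSWP}, \eqref{extendedSWPA}.

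First I would dispose of the bundle-with-connection statement by checking that the choice $\hat D_{\mu\nu\kappa}=0$, $\hat E_{\mu\nu}=0$, and $\hat C_{\mu\nu}$ given by the operator $-3\SWF-iD_1D_2$ meets the hypotheses of Theorem~\ref{TSWText}. The term $\hat C_{\mu\nu}$ is of the required functional form $\hat C_{\mu\nu}(D_\mu,\SWF,\SWP,\theta)$ (resp.\ $\hat C_{\mu\nu}(D_\mu,\SWF,\SWPA,\theta)$ in the adjoint), it is manifestly covariant under gauge transformations, and---the only substantive point---it vanishes on $\hat\Omega_\al$: this is exactly what was checked right before the theorem, since in the adjoint the operator reduces to $-iD_1D_2$ (because $[\SWF,\SWPA]_\star=0$, $\SWF$ being constant and central) and $D_1D_2\Omega_\al=0$ as $D_1=\partial_{\sigma^1}$ annihilates $\Omega_1(\sigma^2)$ and $\Omega_2=V$ is constant; with $\hat\Omega_\al=\Omega_\al$ this gives $\hat C_{12}(D_\mu,\SWF,\hat\Omega_\al,\theta)=0$. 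Then Theorem~\ref{TSWText} applies verbatim and yields $({\mathcal E}_{n,m},A_\mu)\to(\hat{\mathcal E}_{n,m},\hat A_\mu)=({\mathcal E}^\theta_{n,m},A^\theta_\mu)$ together with well-definedness of the induced map on $\phi\in{\mathcal E}_{n,m}$ and on $\Psi\in\textrm{End}({\mathcal E}_{n,m})$.

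Next I would identify the images. For the fundamental sections, the preceding lemma shows that $\Hop$ of \eqref{qsec} solves \eqref{Hodg}, which is precisely \eqref{extendedSWP} with $\hat A_1=0$ and $\hat C_{12}=-3\SWF\star\Hop-iD_1D_2\Hop$, cf.\ \eqref{CHo}; and at $\theta=0$, since $[\AAA,\BBB]_\star=\tfrac{m}{n}\theta\to 0$ collapses $E(\AAA,\BBB)$ to $e^{\AAA\BBB}$, the section \eqref{qsec} reduces to the commutative section \eqref{classec}. Uniqueness of the recursive solution of the generalized Seiberg--Witten equation with this initial datum then forces $\hat\phi=\Hop$. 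For the adjoint monomials, the same operator $-3\SWF-iD_1D_2$ becomes $-iD_1D_2$ in the adjoint, so the relevant equation is \eqref{orderingSW}, i.e.\ \eqref{extendedSWPA} with $\hat C_{12}=-iD_1D_2\SWPA$; the computation following Proposition~\ref{endolemma} verifies that ${Z^\theta_1}^{\,p}\star{Z^\theta_2}^{\,q}$ solves it, and at $\theta=0$ this reduces to ${Z_1}^{\,p}{Z_2}^{\,q}$ (star products become ordinary products and $Z^\theta_\mu|_{\theta=0}=Z_\mu$ by \eqref{Ztheta}, \eqref{commgenerat}). Uniqueness again identifies the quantization of ${Z_1}^{\,p}{Z_2}^{\,q}$ as ${Z^\theta_1}^{\,p}\star{Z^\theta_2}^{\,q}$.

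I do not expect a genuine obstacle here: the heavy computations---$\partial_\theta E(\AAA,\BBB)$ via Lemma~\ref{uno}, the collapse of the $\star$-anticommutator to a matrix product for $Z^\theta_\mu$, and the Bianchi-type identities---have all been carried out in the earlier lemmas. The only thing requiring care is the consistent reading of the symbol $-3\SWF-iD_1D_2$ across the two representations ($\SWF$ acting by $\star$-multiplication in the fundamental and by $\star$-commutator, hence trivially, in the adjoint) and the verification that the explicit solutions \eqref{qsec} and $Z^\theta_\mu$ genuinely reduce to their commutative counterparts at $\theta=0$, so that the uniqueness of the Seiberg--Witten recursion can legitimately be invoked to pin down the images.
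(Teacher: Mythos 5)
Your proposal is correct and follows essentially the same route as the paper: the theorem is obtained by checking that $\hat C_{\mu\nu}=-3\SWF-iD_1D_2$ is gauge covariant and vanishes on $\hat\Omega_\al$ so that Theorem~\ref{TSWText} applies, and then identifying the images via the Lemma showing $\Hop$ solves \eqref{Hodg}, i.e.\ \eqref{extendedSWP} with \eqref{CHo}, and via the ordered-monomial computation showing ${Z^\theta_1}^{\,p}\star{Z^\theta_2}^{\,q}$ solves \eqref{orderingSW}. Your explicit check of the $\theta=0$ initial data and the appeal to uniqueness of the recursive solution merely make precise what the paper leaves implicit (cf.\ the initial conditions in Proposition~\ref{endolemma} and the uniqueness argument in the proof of Theorem~\ref{T:SWT}), so there is no gap.
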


We have recovered within the framework of Seiberg--Witten map Ho's solutions 
$\phi^\theta\in{\cal E}^\theta_{n,m}$ and $Z_\mu^\theta\in \textrm{End}({\cal E}^\theta_{n,m})$ to the noncommutative periodicity conditions.
 It follows that this Seiberg--Witten framework, initially developed in the context of deformation
quantization with $\theta$ a formal deformation parameter, can be
specialized to $\theta\in \mathbb{R}-\{\frac{m}{n}\}$. 
Indeed the $\star$-product can be completed to a nonformal product
\`a la Rieffel, and the connection ${\hat A}_\mu$, and the sections
$\Hop$, $Z_\mu^\theta$ are well defined for  $\theta\in \mathbb{R}-\{\frac{m}{n}\}$.

\section{Morita equivalence, T-duality and Seiberg--Witten
    map}\label{MESWM}

Here we briefly review how Morita equivalence implements T-duality
of Yang-Mills theories and show the compatibility of the Seiberg--Witten
maps with T-duality transformations.

\sk
Two (associative and unital) algebras $A$ and $\tilde A$ are Morita equivalent if their
categories of right modules $\M^{}_A$ and $\M_{\tilde A}$ are equivalent. By a theorem of Morita
two algebras $ A$ and $\tilde A$ are Morita equivalent if and only if there exists
a finitely generated and projective $A$-module ${\cal  E}\in
\M^{}_A$ such that every other $A$ module is a quotient of ${\cal
  E}^{\oplus N}$ for some integer $N$. In this case ${\cal  E}\in
{}^{}_{^{}A}\M^{\phantom{j_j}}_{\tilde A}$ is called a Morita
equivalence bimodule. 
The equivalence between the categories of representations $\M_A$ and
$\M_{\tilde A}$ is easily constructed via 
${\cal E}\in{}^{}_{^{}A}\M^{\phantom{j_j}}_{\tilde A}$. 
The main point is that given a module $E\in  \M_A$ the tensor product over $A$ with ${\cal
  E}\in{}^{}_{^{}A}\M^{\phantom{j_j}}_{\tilde A}$ gives a module
$\tilde E=E\otimes_A {\cal E}\in  \M_{\tilde A}$, and  morphisms of
$\M_A$  modules $\varphi: E\to F$ are mapped to morphisms of
$\M_{\tilde A}$  modules $\varphi\otimes_A id: E\otimes_A {\cal E}\to
F\otimes_A {\cal E}$.  
The bimodules ${\cal{E}}_{n,m}^\theta$, $n>0$,
$m\not=0$, $n,m$ relatively prime, in Section \ref{UNNC}
(Heisenberg modules) are
examples of Morita equivalence bimodules and
prove Morita equivalence of 2-dimensional tori 
$T_{(-\tilde\theta)}$ and $T_{(-\theta)}$ related by a fractional $SL(2,\mathbb{Z})$ transformation: $(-\tilde\theta)
=\frac{a(-\theta)+b}{m(-\theta)+n}$ with  $\big({}^a_m\, {}^b_n\big)
\in SL(2,\mathbb{Z})$. Since $T_{(-\theta)}$ is isomorphic to
$T_{\theta}$ we further have that
 2-dimensional tori $T_{\tilde\theta}$ and $T_{(-\theta)}$ related by a fractional
$GL(2,\mathbb{Z})$ transformation are Morita equivalent. This is also
a necessary condition:
The 2-dimensional tori $T_{(-\tilde\theta)}$ and $T_{(-\theta)}$ are
Morita equivalent, if and only if$\,$\footnote{There are different notions of Morita equivalence: the one just
recalled for algebras (and more generally rings),  a stronger notion
for $C^*$-algebras, and, in the case of (multidimensional) tori, an
even stronger one called complete Morita equivalence of smooth
noncommutative tori \cite{Schwarz:1998qj} (called gauge Morita equivalence in
\cite{Konechny:2000dp}). These notions for 2-dimensional tori are all equivalent
(the bimodules ${\cal{E}}_{n,m}^\theta$ can be completed to full
Hilbert modules providing $C^*$-algebra Morita equivalence, and they are Heisenberg modules with constant curvature connections that provide complete Morita equivalence).
} 
$-\tilde \theta =\frac{a(-\theta) + b }{m(-\theta) + n}\;,$ with 
$\big({}^a_m\, {}^{b}_{n}\big) \in  GL(2,\mathbb{Z})$, \cite{rieffel1981}. 
\sk
In order to study the equivalence of categories between  Morita
equivalent torus algebras we consider the Heisenberg module
${\cal{E}}_{0,1}^\frac{1}{s-\theta}$, $s\in \mathbb{Z}$,
that is a bimodule in
${}^{}_{T_{(-\theta)}}\M^{}_{T_{(\frac{1}{\theta-s})}}$, accounting
for the transformation $-\theta\to
\tfrac{1}{\theta-s}$.
We recall from Section \ref{UNNC}, that
${\cal{E}}_{0,1}^\frac{1}{s-\theta}={\cal{E}}_{1,1}^{\frac{1}{s-\theta}+1}$
is the module of sections of the $U(1)$-bundle over ${T_{(\frac{1}{\theta-s}+1)}} $ with charge
$m=1$, algebraically it is the vector space of Schwartz
functions on $\mathbb{R}$, with
right {$T_{(\frac{1}{\theta-s})}$}-action given by
    \eq{\label{phiUx}
      (\tilde\phi \triangleleft U_1)(x) =&\;
      {\tilde\phi\Bigl(x+\frac{1}{s-\theta}\Bigr)}\;,\qquad
      (\tilde\phi \triangleleft U_2)(x)=\; \tilde\phi(x)\,e^{2\pi i
        x}\;.
}
The algebra $T_{(-\theta)}$ of endomorphisms is generated by 
\eq{
      (Z_1 \triangleright \tilde\phi)(x) =&\;\tilde\phi(x-1)\;, \qquad
      (Z_2 \triangleright \tilde\phi)(x)=\;{\tilde\phi(x) \,e^{2\pi i
          x(\theta-s)}}=
\;{\tilde\phi(x) \,e^{2\pi i x\theta}}\;,
    }
while the connection is
\eq{
  D_{1} \tilde \phi\/(x) =\; \frac{-ix}{\theta}\,\tilde \phi_j(x)\;,\label{D1s}~~~~~
    D_{2\,} \tilde \phi\/(x)=\;
                                           \frac{1}{2\pi}\,\frac{\partial}{\partial
                                           x}\tilde \phi(x)\;.
                                           }
Since any projective module over
$T_{(-\theta)}$ is equivalent to a direct sum of Heisenberg modules
${\cal{E}}_{n,m}^\theta$ ($n\geq 0$,
$m\not=0$, $n,m$ relatively prime) 
 or the trivial module $T_{(-\theta)}$,  it is sufficient to
describe the transformations of these modules under a Morita
equivalence bimodule ${\cal{E}}_{0,1}^\frac{1}{s-\theta}$, $s\in
\mathbb{Z}$ in order to know it on every module in  $\M_{T_{(-\theta)}}$.

We have an isomorphism
(see for example the outlined explicit derivation in
\cite[\S 3]{Schwarz:1998qj})
\eq{{\cal{E}}_{n,m}^{\theta}\otimes^{}_{T_{(-\theta)}}{\cal{E}}_{0,1}^{\frac{1}{s-\theta}}\:\simeq\:
      {\cal{E}}_{m,-n+ms}^{\frac{1}{s-\theta}}\label{tensorA}~,
}
in particular, both left hand side and right hand side are left
$T_{\check{\theta}\,}$-modules with $\check{\theta}={\frac{a(-\theta)+b}{m(-\theta)+n}}$. 
We see that under the Morita equivalence bimodule $
{\cal{E}}_{0,1}^{\frac{1}{s-\theta}} $ the module
${\cal{E}}_{n,m}^{\theta}$ is mapped (up to equivalence) to the module
${\cal{E}}_{\tilde n,\tilde m}^{\tilde\theta}={\cal{E}}_{m,-n+ms}^{\frac{1}{s-\theta}}$,
thus
$\theta\mapsto\tilde\theta=\big({}^{~0}_{-1}\,{}^{1}_{s}\big)\theta=\tfrac{1}{s-\theta}~,~~\big({}^{n}_{m}\big)\mapsto
\big({}^{\tilde n}_{\tilde m}\big)=
\big({}^{~0}_{-1}\, {}^{1}_{s}\big) \big({}^{n}_{m}\big)=\big({}^{~~~m}_{-n+ms}\big)\,.
$
\sk
Similarly, let's define the bimodule ${\cal{E}}_{1,0}^{\theta+1}\in
{}^{}_{T_{(-\theta)}}\M^{}_{T_{(-\theta-1)}}$ to be $T_{(-\theta)}$
 as a left $T_{(-\theta)}$-module, with right $T_{(-\theta-1)}$-action
defined by  $e\triangleleft
U_\mu^{T_{(-\theta-1)}}=eU_\mu$, where $e\in
{\cal{E}}_{1,0}^{\theta+1}$, $U_\mu^{T_{(-\theta-1)}}$ are
the generators of $T_{(-\theta-1)}$,  $U_\mu$ those of
$T_{(-\theta)}$ and $eU_\mu$ is the product in $T_{(-\theta)}$.  Then it is easy to show that 
\eq{
{\cal{E}}_{n,m}^{\theta}\otimes^{}_{T_{(-\theta)}}{\cal{E}}_{1,0}^{{\theta+1}}\,\simeq\,
      {\cal{E}}_{n+m, m}^{\theta+1}\label{tensor+1}~~.
}
Indeed using the nonvanishing global section $1\in
{\cal{E}}_{1,0}^{\theta+1}$ we write a generic section of
${\cal{E}}_{n,m}^{\theta}\otimes^{}_{T_{(-\theta)}}{\cal{E}}_{1,0}^{{\theta+1}}$
as $\tilde\phi \otimes^{}_{T_{(-\theta)}} 1$.  We prove
the equivalence \eqref{tensor+1}  by showing that $\tilde\phi
\otimes^{}_{T_{(-\theta)}} 1$ transforms as a
section of ${\cal{E}}_{n+m, m}^{\theta+1}$, 
\begin{eqnarray}
\big((\tilde\phi \otimes^{}_{T_{(-\theta)}} 1)\triangleleft
U_1^{T_{(-\theta-1)}}\big)^{}_j(x)&\!\!=\!\!&(\tilde\phi \otimes^{}_{T_{(-\theta)}}
1)^{}_{j-1}\big(x-\frac{n+m}{m}+(\theta+1)\big)\label{proofa}\;,\\[.2em]
\big((\tilde\phi \otimes^{}_{T_{(-\theta)}} 1)\triangleleft
U_2^{T_{(-\theta-1)}}\big)^{}_j(x)&\!\!=\!\!&(\tilde\phi \otimes^{}_{T_{(-\theta)}}
1)^{}_{j}\big(x)\,e^{2\pi i (x-j(n+m)/m)}\;~. \nonumber
\end{eqnarray} 
E.g. \eqref{proofa} follows from $(\tilde\phi \otimes^{}_{T_{(-\theta)}} 1)\triangleleft
U_1^{T_{(-\theta-1)}}=\tilde\phi \otimes^{}_{T_{(-\theta)}} 1\triangleleft
U_1^{T_{(-\theta-1)}}=
\tilde\phi \otimes^{}_{T_{(-\theta)}} 
U_1=
\tilde\phi\triangleleft U_1 \otimes^{}_{T_{(-\theta)}} 
1$, and $\big(\tilde\phi\triangleleft U_1 \otimes^{}_{T_{(-\theta)}} 
1)^{}_j(x)=\tilde\phi_{j-1}(x-\frac{n}{m}+\theta) \otimes^{}_{T_{(-\theta)}} 
1=(\tilde\phi \otimes^{}_{T_{(-\theta)}}
1)^{}_{j-1}\big(x-\frac{n+m}{m}+(\theta+1)\big)$.
We have seen that tensoring with the Morita equivalence bimodule
${\cal{E}}_{1,0}^{\theta+1}\in
{}^{}_{T_{(-\theta)}}\M^{}_{T_{(-\theta-1)}}$  gives the map
${\cal{E}}_{n,m}^{\theta}\mapsto {\cal{E}}_{\tilde n,\tilde m}^{\tilde\theta}= {\cal{E}}_{n+m, m}^{\theta+1}$, thus 
$\theta\mapsto\tilde\theta=\big({}^{1}_{0}\,{}^{1}_{1}\big)\theta=\theta+1,~\big({}^{n}_{m}\big)\mapsto
\big({}^{\tilde n}_{\tilde m}\big)=
\big({}^{1}_{0}\, {}^{1}_{1}\big) \big({}^{n}_{m}\big)=\big({}^{n+m}_{~~\!m}\big)\,.
$
\sk

We similarly constuct a 
bimodule ${\cal E}^{-\theta}_{1,0}$ that realizes the isomorphism
$T_{(-\theta)}\simeq T_{\theta}$ as a Morita equivalence.
By definition ${\cal E}^{-\theta}_{1,0}\in
{}^{}_{T_{(-\theta)}}\M^{}_{T_{(\theta)}}$ is $T_{(-\theta)}$ itself as
a left $T_{(-\theta)}$-module, while the right $T_{\theta}$-action on
${\cal{E}}_{1,0}^{-\theta}$ is defined by $e\triangleleft
U_1^{T_{\theta}}=eU^{-1}_1$, $e \triangleleft U_2^{T_{\theta}}=eU_2$, where $e\in
{\cal{E}}_{1,0}^{\theta+1}$, $U_\mu^{T_{\theta}}$ are
the generators of $T_{\theta}$, $U_\mu$ those of
$T_{(-\theta)}$ and $eU^{\pm 1}_\mu$ is the product in $T_{(-\theta)}$.  
 A generic section of
${\cal{E}}_{n,m}^{\theta}\otimes^{}_{T_{(-\theta)}}{\cal{E}}_{1,0}^{{-\theta}}$
is $\tilde\phi\otimes_{T_{(-\theta)}} 1$ with $\tilde\phi$
a section of ${\cal{E}}_{n,m}^{\theta}$.
 We show that 
\eq{
{\cal{E}}_{n,m}^{\theta}\otimes^{}_{T_{(-\theta)}}{\cal{E}}_{1,0}^{{-\theta}}\,\simeq\,
      {\cal{E}}_{n, -m}^{-\theta}\label{tensor-}~~
}
by showing that the map  
$\tilde\phi\otimes 1 \mapsto \iota(\tilde\phi\otimes
1)$ defined by $\iota(\tilde\phi\otimes
1)_j(x) =\tilde \phi_{|m|-j}(x)$, (we can assume
$|m|-j=1,2,...|m|$ due to $\mathbb{Z}_{|m|}$ cyclicity)
is a right $T_{\theta}$-module isomorphism.
Proof:  We have to
show that $\iota((\tilde\phi\otimes 1)\triangleleft
U^{T_\theta}_\mu)=\iota(\tilde\phi\otimes 1)\triangleleft
U^{T_\theta}_\mu$.
Indeed,\begin{eqnarray}
\iota((\tilde\phi\otimes 1)\triangleleft
U_1^{T_\theta})_j(x)&\!\!\!=\!\!\!&\iota(\tilde\phi\triangleleft
U^{-1}_1\otimes
1)_j(x)
=(\tilde\phi\triangleleft U_1^{-1})_{|m|-j}(x)
=\tilde\phi_{|m|-j+1}(x+\frac{n}{m}-\theta)
\nonumber\\
&\!\!\!=\!\!\!&\iota(\tilde\phi\otimes 1)_{j-1}(x+\frac{n}{m}-\theta)=
(\iota(\tilde\phi\otimes 1)\triangleleft 
U^{T_\theta}_1)_j(x)~,
\end{eqnarray}
and similarly for $U^{T_\theta}_2$. 
We have seen that tensoring with the Morita equivalence bimodule
${\cal{E}}_{1,0}^{-\theta}$ gives the transformation (up to equivalence)
${\cal{E}}_{n,m}^{\theta}\mapsto {\cal{E}}_{\tilde n,\tilde m}^{\tilde\theta}= {\cal{E}}_{n+m, m}^{\theta+1}$, thus 
$\theta\mapsto\tilde\theta=\big({}^{1}_{0}\,{}^{\;0}_{-1}\big)\theta=-\theta~,~\big({}^{n}_{m}\big)\mapsto
\big({}^{\tilde n}_{\tilde m}\big)=
\big({}^{1}_{0}\, {}^{\;0}_{-1}\big) \big({}^{n}_{m}\big)=\big({}^{\;n}_{-m}\big)\,.
$
\sk

Since $SL(2,\mathbb{Z})$ is generated by $\big({}^{~0}_{-1}\,
{}^{1}_{0}\big)$ and $\big({}^1_0\,{}^{1}_{1}\big)$, and
$GL(2,\mathbb{Z})$ by considering also 
 $\big({}^1_0\,{}^{\;0}_{-1}\big)$,
  we see that
\eqref{tensorA}, \eqref{tensor+1} and \eqref{tensor-}
generate the whole nongeometric $GL(2,\mathbb{Z})$ duality group that
acts on the modules ${\cal E}_{n,m}^\theta\in \M_{T_{(-\theta)}}$
as
\eq{\theta\mapsto\tilde\theta= \frac{a\theta + b }{m\theta +
        n}~~,~~~
 \begin{pmatrix}
        n\\
        m
      \end{pmatrix} \mapsto  \begin{pmatrix}
        \tilde n\\
        \tilde m
      \end{pmatrix}=\begin{pmatrix}
        a & b\\
        c & d
      \end{pmatrix} \begin{pmatrix}
        n\\
        m
      \end{pmatrix}~~,~~~\begin{pmatrix}
        a & b\\
        c & d
      \end{pmatrix} \in GL(2,\mathbb{Z})\;.
\label{GL2Z} 
}
\sk

We now describe how the equivalence of the categories of modules
${\cal M}_{T_{(-\theta)}}$ and ${\cal M}_{T_{(-\tilde\theta)}}$ (with
$\tilde\theta$ in the same $GL(2,\mathbb{Z})$ orbit of $\theta$) 
is extended to an equivalence between modules with connections,
called  complete (or gauge) Morita equivalence.
This is due to the canonical connections of
the bimodules ${\cal{E}}_{0,1}^{\frac{1}{s-\theta}}$ and
${\cal{E}}_{1,0}^{\theta+1}$, that correspondingly define these bimodules as
complete (or gauge) Morita equivalence bimodules, see definition after \eqref{CME}.
It is this gauge Morita equivalence that implements $T$-duality
transformations between gauge theories on noncommutative bundles
${\cal E}_{n,m}^\theta$ on ${T_{(-\theta)}}$ and ${\cal E}_{\tilde
  n,\tilde m}^{\tilde \theta}$ on ${T_{(-\tilde\theta)}}$.

To any (right) connection $\nabla_\mu$ on  ${\cal E}_{n,m}^\theta$
there canonically corresponds a connection on the tensor product bundle
$
{\cal{E}}_{n,m}^{\theta}\otimes^{}_{T_{(-\theta)}}{\cal{E}}_{0,1}^{\frac{1}{s-\theta}}
$
given by the sum $(\theta-s)\nabla_\mu\otimes \id +\id\otimes D_\mu$ 
where $D_\mu$ is the canonical constant curvature connection of the
Heisenberg module ${\cal{E}}_{0,1}^{\frac{1}{s-\theta}} $. Explicitly, for
$\tilde\phi\in {\cal{E}}_{n,m}^{\theta}$, $\tilde\phi'\in
{\cal{E}}_{0,1}^{\frac{1}{s-\theta}}$,
$\big((\theta-s)\nabla_\mu\otimes \id +\id\otimes D_\mu\big)(\tilde\phi
\otimes^{}_{T_{(-\theta)}}
\tilde\phi')=(\theta-s)\nabla_\mu\tilde\phi
\otimes^{}_{T_{(-\theta)}}\tilde\phi'\,+\,\tilde\phi \otimes^{}_{T_{(-\theta)}}D_\mu\tilde\phi'$.
The rescaling by $\theta-s$ is needed in order for 
the canonical derivations $\partial_{\sigma^\mu}$ of $T_{(-\theta)}$
(entering the Leibniz rule for the connection $\nabla_\mu$)
to match the derivations $\hat\delta_\mu$ on
${\rm{End}}({\cal{E}}_{0,1}^{\frac{1}{s-\theta}})\simeq T_{(-\theta)}
$ induced from the connection $D_\mu$ as in \eqref{CME}.
Indeed this matching insures that the sum $(\theta-s)\nabla_\mu\otimes
\id +\id\otimes D_\mu$ is well defined on
${\cal{E}}_{n,m}^{\theta}\otimes^{}_{T_{(-\theta)}}{\cal{E}}_{0,1}^{\frac{1}{s-\theta}}$,
meaning that since the tensor product is over $T_{(-\theta)}$, the
  connection on  $\tilde\phi
  a\otimes_{T_{(-\theta})}\tilde\phi'$ equals that on $\tilde\phi
  \otimes_{T_{(-\theta})}a \tilde\phi'$, for all $a\in T_{(-\theta)}$.

As a special case we can choose $\nabla_\mu=D_\mu$, where this latter is the
canonical constant curvature connection of the Heisenberg module
${\cal E}_{n,m}^\theta$. Then
${\cal{E}}_{n,m}^{\theta}\otimes^{}_{T_{(-\theta)}}{\cal{E}}_{0,1}^{\frac{1}{s-\theta}}$
has constant curvature connection
${\mathcal{D}}_\mu:=(\theta-s)D_\mu\otimes \id +\id\otimes D_\mu$,
indeed the curvature is easily computed to be ${\mathcal
  F}_{12}=i[{\mathcal{D}}_1,{\mathcal{D}}_2]=\frac{1}{2\pi}
(\theta-s)\frac{n-ms}{n-m\theta}\mathbf{1}$. Furthermore,
this curvature coincides with that of the Heisenberg module ${\cal{E}}_{m,-n+ms}^{\frac{1}{s-\theta}}$.
This proves that the equivalence  $
{\cal{E}}_{n,m}^{\theta}\otimes^{}_{T_{(-\theta)}}{\cal{E}}_{0,1}^{\frac{1}{s-\theta}}\simeq
{\cal{E}}_{m,-n+ms}^{\frac{1}{s-\theta}}$ extends to an equivalence between
Heisenberg modules, i.e., modules with constant curvature connections. 

Similarly, to any  connection $\nabla_\mu$ on  ${\cal E}_{n,m}^\theta$
there canonically corresponds a connection on the tensor product bundle
$
{\cal{E}}_{n,m}^{\theta}\otimes^{}_{T_{(-\theta)}}{\cal{E}}_{1,0}^{\theta+1}
$
given by the sum $\nabla_\mu\otimes \id +\id\otimes \partial_{\sigma^\mu}$
where $\partial_{\sigma^\mu}$ is the canonical flat connection of the
trivial line bundle ${\cal{E}}_{1,0}^{\theta+1}\simeq
T_{(-\theta)}$. We see that the curvature is unchanged. It is also
intructive to consider the case
${\cal{E}}_{n,m}^{\theta}\otimes^{}_{T_{(-\theta)}}{\cal{E}}_{1,0}^{-\theta}$. 
The canonical flat connection of the right $T_\theta$-module
${\cal{E}}_{1,0}^{-\theta}$ is
$\partial_{-\sigma^1}, \partial_{\sigma^2}$, indeed these are the
canonical derivations  of $T_\theta$ with generators $U_1^{T_\theta}=U_1^{-1}$,
$U_2^{T_\theta}=U_2$, if $\partial_{\sigma^1}, \partial_{\sigma^2}$
are those of $T_{(-\theta)}$ with generators $U_1,U_2$. Matching the derivations
$\hat\delta_\mu$ induced as in \eqref{inducedderiv} by the connection
$\partial_{-\sigma^1},\partial_{\sigma^2}$ of
${\cal{E}}_{1,0}^{-\theta}$ with the derivations
$\partial_{\sigma^1},\partial_{\sigma^2}$ of $T_{(-\theta)}$ (defining
the Leibnitz rule of the connection $\nabla_\mu$ of  ${\cal{E}}_{n,m}^{\theta}$) we
obtain the connection $-\nabla_1\otimes \id
+\id\otimes \partial_{-\sigma^1}$,  $\nabla_2\otimes \id
+\id\otimes \partial_{\sigma^2}$ on the tensor product bundle
${\cal{E}}_{n,m}^{\theta}\otimes^{}_{T_{(-\theta)}}{\cal{E}}_{1,0}^{-\theta}$.  
Notice that its curvature is the opposite of that of ${\cal{E}}_{n,m}^{\theta}$.
\sk
As we have seen in the previous section, the Seiberg--Witten map
quantizes the Heisenberg modules ${\cal{E}}_{n,m}$. It is
therefore natural to ask if Seiberg--Witten map quantization is
compatible with complete Morita equivalence, i.e. if to a given Seiberg--Witten quantization there corresponds a Seiberg--Witten quantization of the T-dual modules. This is expected since the Seiberg--Witten quantization of ${\cal{E}}_{n,m}$ is $\hat{\cal{E}}_{n,m}={\cal{E}}^\theta_{n,m}$.
We indeed give a positive answer in the following section.

  \subsection{Compatibility of Seiberg--Witten maps with T-duality}
Since $GL(2,\mathbb{Z})$ is generated by 
$\big({}^{~0}_{-1}\, {}^{1}_{0}\big)$, $\big({}^1_0\, {}^1_1\big)$ and $\big({}^1_0\, {}^{~0}_{-1}\big)$,
corresponding to $\theta\mapsto  -1/\theta$, $\theta\mapsto \theta+1$, $\theta\mapsto -\theta$, it
sufficies to prove compatibility with these transformations.
For example, the nontrivial dualities $\theta\to\tfrac{-1}{\theta-s}$,
$s\in \mathbb{Z}$ (including that of the generator $\big({}^{~0}_{-1}\,
{}^{1}_{0}\big)$) are compatible with the Seiberg--Witten
quantization maps according to the following commutative diagram
  \begin{equation}
    \begin{matrix}
 \big({\mathcal{E}}_{n,m}^\theta\in\prescript{}{\textrm{End}({\mathcal E}^\theta_{n,m})}{\M}^{}_{\TT2_{(-\theta)}}, A^\theta_\mu\big)
 & \xrightarrow{\;\otimes_{T_{(-\theta)}\,}
   (\CE_{0,1}^{\frac{1}{s-\theta}}, \,D_\mu)\;~} &
 \big({\mathcal{E}}_{m,-n+ms}^{~\tilde \theta = \frac{1}{s-\theta}}\in\prescript{}{\textrm{End}({\mathcal
     E}^{\tilde\theta}_{m,-n+ms})}{\M}^{}_{\TT2_{(-\tilde\theta)}}, {A}^{\tilde\theta}_\mu\big)
\\[1.4em]
    \!\!\! SW^{\theta'}_\theta\Bigg\downarrow~~~~ & &  \!\!\!SW^{{{\tilde\theta\,}}_{}'}_{\tilde\theta}  \Bigg\downarrow   \\
     \big({\mathcal{E}}_{n,m}^{\theta'}\in\prescript{}{\textrm{End}({\mathcal
         E}^{\theta'}_{n,m})}{\M}^{}_{\TT2_{(-\theta')}},
     A^{\theta'}_\mu\big) & \xrightarrow{~\;\otimes_{T_{(-\theta')}\,}
       (\CE_{0,1}^{\frac{1}{s-\theta'}},\,D_\mu)\;~} & 
\big({\mathcal{E}}_{m,-n+ms}^{\,{\tilde{\theta}\,}'=_{} \tilde{\theta'\;} = \frac{1}{s-\theta'}}\in\prescript{}{\textrm{End}({\mathcal E}^{{\tilde{\theta}\,'}}_{m,-n+ms})}{\M}^{}_{\TT2_{(-{\tilde\theta}\,')}}, {A}^{\tilde{\theta}\,'}_\mu\big)\label{CDC}
    \end{matrix}
  \end{equation}
where the vertical arrow $SW^{\theta'}_\theta$ denotes one of the Seiberg--Witten maps
of Theorem \ref{TSWText},  from
$\theta$ to $\theta'$, cf. \eqref{SWTT'}, and similarly for
$SW^{{{\tilde\theta\,}}_{}'}_{\tilde\theta}$, while the horizontal arrow
$\otimes^{}_{T_{(-\theta)}} (\CE_{0,1}^{\frac{1}{s-\theta}}, \,D_\mu)$
 denotes the map of Heisenberg
modules with connections to Heisenberg modules with connection
obtained via the complete Morita equivalence bimodule
$(\CE_{0,1}^{\frac{1}{s-\theta}}, \,D_\mu)$, and similarly for the
lower horizontal arrow.

The  commutativity of the diagram is due to the definition
    ${{\tilde\theta}^{}\,}'=\tilde{\theta'\;}$ in the Seiberg--Witten
    map  $SW_{\tilde\theta}^{{{{\tilde\theta}\,}}_{}'}$.
The Seiberg--Witten map also relates the
      equivalence bimodules with connection 
$(\CE_{0,1}^{\frac{1}{s-\theta}},\,D_\mu) $ and $(\CE_{0,1}^{\frac{1}{s-\theta'}},\,D_\mu)$
 used to obtain the $T$-dual modules in the right hand side of the diagram. Just recall that 
$(\CE_{0,1}^{\theta},\,D_\mu)=(\CE_{1,1}^{\theta+1}, \,D_\mu)$
is the module of sections of the $U(1)$-bundle over ${T_{(\frac{1}{\theta-s}+1)}} $ with charge
$m=1$, cf. \eqref{phiUx}-\eqref{D1s},
and conclude that the corresponding Seiberg--Witten map is 
\begin{equation}\label{EDSW1}
    \begin{matrix}
 (\CE_{0,1}^{\frac{1}{s-\theta}},
\,D_\mu)=(\CE_{1,1}^{\frac{1}{s-\theta}+1}, \,D_\mu)
 &
 \xrightarrow{\;\,SW_{{\frac{1}{s-\theta}+1}}^{{\frac{1}{s-\theta'}+1}}\;~}
 &
(\CE_{1,1}^{\frac{1}{s-\theta'}+1},\,D_\mu)=
(\CE_{0,1}^{\frac{1}{s-\theta'}},\,D_\mu)~.
\end{matrix}
\end{equation}

The commutativity of the diagrams like \eqref{CDC} but for the other two $T$-duality
transformations $\big({}^1_0\, {}^1_1\big)$ and
$\big({}^1_0\, {}^{~0}_{-1}\big)$, is straighforward, just 
consider the upper horizontal arrows with
$({\cal{E}}_{1,0}^{{\theta+1}},D_\mu)$, respectively
$({\cal{E}}_{1,0}^{{-\theta}},D_\mu)$, and similarly with $\theta\to
\theta'$ for the lower horizontal arrows. Then
consider
$SW_{\theta+1}^{\theta'+1}$, respectively  $SW_{-\theta}^{-\theta'}$ in the right hand
  side of the corresponding diagrams.
Correspondingly, the Seiberg--Witten map \eqref{EDSW1} is replaced by
$SW_{\theta+1}^{\theta'+1}\!:({\cal{E}}_{1,0}^{{\theta+1}},D_\mu)\to ({\cal{E}}_{1,0}^{{\theta'+1}},D_\mu) $, respectively $SW_{-\theta}^{-\theta'}\!: ({\cal{E}}_{1,0}^{{-\theta}},D_\mu)\to ({\cal{E}}_{1,0}^{{-\theta'}},D_\mu)\,.$

In conclusion we have that Seiberg--Witten maps are compatible
with complete Morita equivalence.

\sk
Finally we mention that the Seiberg--Witten map
$SW_\theta^{\theta+1}$ always differs from a Morita equivalence,
however if we consider only trivial bundles ($m=0$), then it corresponds to
tensoring with   $\CE_{1,0}^{\theta+1}$, (c.f. \eqref{tensor+1}).

\subsubsection*{Acknowledgements}
The authors are members of COST Action MP1405 \textit{QSpace - 
Quantum Structure of Spacetime} and of  INFN, CSN4, Iniziativa
Specifica GSS, that have partially supported this project. This
research has also a financial support of the Universit\`a
del Piemonte Orientale.  A.D. is grateful to Heriot-Watt University
for hospitality. P.A. and A.D. acknowledge hospitality form LMU, Munich. 
P.A. is affiliated to INdAM, GNFM (Istituto Nazionale di Alta Matematica, Gruppo Nazionale di Fisica Matematica).

\bibliographystyle{utphys}
\bibliography{SWNCT}

\end{document}